\newtheorem{asm}{\bf Assumption}
\newtheorem{lemt}{\bf Lemma}
\newtheorem{thmt}{\bf Theorem}
\newtheorem{rmk}{\bf Remark}
\definecolor{PPY}{rgb}{0.89,0.4,0.09} 
\definecolor{R}{rgb}{1, 0.325,0.286}
\def\BibTeX{{\rm B\kern-.05em{\sc i\kern-.025em b}\kern-.08em
		T\kern-.1667em\lower.7ex\hbox{E}\kern-.125emX}}
\begin{document}
	\title{Adaptive Distributed Observer-based  Model \\  Predictive Control for Multi-agent Formation \\with Resilience to Communication Link Faults}
	\author{Binyan Xu, \IEEEmembership{Member, IEEE}, Yufan Dai, \IEEEmembership{Graduate Student Member, IEEE}, \\
		Afzal Suleman, and Yang Shi \IEEEmembership{Fellow, IEEE}
		\thanks{This paper was supported by the Natural Sciences and Engineering Research Council of Canada (NSERC).}
		\thanks{B. Xu is with the School of Engineering, University of Guelph, Guelph ON N1G 2W1, Canada (e-mail; binyan@uoguelph.ca)  ;Y. Dai, A. Suleman and Y. Shi are with the Department of Mechanical Engineering, University of Victoria, Victoria BC V8P 5C2, Canada (e-mail: yufandai@uvic.ca; suleman@uvic.ca; yshi@uvic.ca). }\vspace{-1cm}}
	\maketitle
	
	\begin{abstract}
		In order to address the nonlinear multi-agent formation tracking control problem with input constraints and unknown communication faults, a novel adaptive distributed observer-based distributed model predictive control method is developed in this paper. 
		This design employs adaptive distributed observers in local control systems to estimate the leader's state, dynamics, and relative positioning with respect to the leader. Utilizing the estimated data as local references, the original formation tracking control problem can be decomposed into several fully localized tracking control problems, which can be efficiently solved by the local predictive controller. Through the incorporation of adaptive distributed observers, this proposed design not only enhances the resilience of distributed formation tracking against communication faults but also simplifies the distributed model predictive control formulation. 
	\end{abstract}
	
	\begin{IEEEkeywords}
		Model predictive control; Unmanned aerial vehicles; Adaptive control; Fault-tolerant; Multi-agent
		system
	\end{IEEEkeywords}
	
	\section{Introduction}
	Multi-agent systems (MASs), distinguished by decentralized task allocation, distributed mission execution, and self-organization, have drawn increasing attention across diverse fields due to their broad spectrum of applications. Moreover, the study of formation control, which involves controlling the positions and orientations of agents to attain a particular geometric configuration with respect to a leader, has emerged as a prominent research topic \cite{ren2011distributed}. While numerous studies have been made in the field of formation control \cite{qin2016recent}, it is worth noting that many investigations do not account for input constraints and control optimality. However, the incorporation of input constraints is imperative for an accurate formulation of real-world problems. The inclusion of optimality is also essential to fully exploit available control resources while satisfying input constraints. 
	
	An appealing framework for formation control is distributed model predictive control (DMPC). DMPC inherits the advantages of centralized model predictive control (MPC), including systematic handling of hard constraints, optimized control performance, inherent robustness, and the ability to cope with nonlinear multi-variable systems \cite{shi2021advanced}. In addition, the distributed implementation fashion of DMPC effectively distributes the computation workload, further enhancing its appeal and practicality \cite{keviczky2004study}. Numerous DMPC methods for MASs have been proposed, as summarized in review papers such as \cite{christofides2013distributed, negenborn2014distributed}. However, existing DMPC results may encounter limitations when tackling the distinctive challenges posed by multi-UAV formation problems. First of all, the computation resources of onboard microcontrollers are limited, while the proposed DMPC methods with terminal constraints demand a sufficiently long prediction horizon and, thereby, a large computation amount to ensure feasibility. Secondly, the majority of DMPC methods are tailored to address the cooperative regulation problem that drives all agents toward a prior-known set point \cite{dunbar2006distributed,li2013robust}. These methods underlie an implicit assumption regarding the communication graph that each agent in the system is directly linked to the leader. Such an assumption is not true in the context of formation control, where the leader’s information is often only available to a portion of the followers. A notable exception, proposed in \cite{zheng2016distributed}, does not require globally known leader information but is only applicable to multi-vehicle platoon scenarios for tracking a constant-speed leader. There also exists a contradiction between the high communication workload of DMPC and the limited bandwidth of wireless communication networks employed in multi-UAV systems. To attain global stability and feasibility of local optimization problems, it is essential for each distributed optimizer to have access to its neighbors' most up-to-date optimized control sequences over the prediction horizon \cite{dunbar2006distributed}. As a result, DMPC usually entails a substantial amount of information exchange, iteratively \cite{mercangoz2007distributed,stewart2010cooperative,venkat2005stability} or sequentially \cite{richards2004decentralized,richards2007robust}. Moreover, the distributed predictive controller, which involves predictions for both itself and its neighbors, not only needs to receive information from its neighbors but also needs to identify the source of that information. 
	
	A significant challenge that remains inadequately addressed in current DMPC studies is maintaining control performance in the presence of communication faults. The communication network plays a crucial role in enabling interactions and facilitating cooperative behaviors among agents. However, the inclusion of communication networks introduces additional vulnerabilities to the control system, particularly when facing unexpected events such as cyber-attacks and channel fading. Communication faults between agents can pose major threats to multi-agent control systems, potentially deteriorating control performance or even overall system stability. Attacks and fading within communication networks can be modeled as uncertainties in the communication links.  Recent studies in \cite{li2019robust,li2014multi,ma2015mean} explore the consensus of MASs with stochastic uncertain communication networks. In \cite{wang2010consensus,zelazo2015robustness}, deterministic network uncertainties are examined within the context of MASs with single integrator agents. In \cite{chen2020adaptive}, a distributed state observer-based adaptive control protocol is designed to address the leader-follower consensus for linear MASs with communication link faults. This study demonstrates that the distributed leader state observer network is resilient to communication link faults. However, it requires that all following agents know the leader dynamics. As an extension of this result,  \cite{yang2021adaptive} proposes adaptive distributed leader state/dynamics observers and control protocols, offering a completely distributed solution for synchronizing linear MASs with time-varying edge weights without the need for global knowledge of the leader dynamics.  Most existing research on resilience cooperative control in the presence of communication uncertainties is directed towards unconstrained MASs with linear dynamics. 
	Moreover, to the best of our knowledge, fully distributed control for formation tracking under communication link faults has not yet received significant research attention. 
	
	Motivated by the aforementioned investigations, this paper develops a novel adaptive distributed observer-based DMPC method for nonlinear MASs in the presence of input constraints and communication link faults. To achieve the formation tracking objective without relying on global access to the leader's information, adaptive distributed observers are developed for all local control systems, estimating online the leader's state, dynamics, and the desired relative position with respect to the leader. With information estimated by these observers, distributed MPC controllers are independently developed to manipulate each agent toward a predetermined formation relative to the estimated leader while adhering to input constraints. The asymptotic convergence of the observation process is demonstrated, which in turn proves the closed-loop control performance of the overall system.  To validate the efficacy of the proposed design, simulations are conducted using both a numerical example and a practical 5-UAV system. The key contributions of this research work include:
	\begin{itemize}
		\item In contrast to prior works such as \cite{wang2010consensus, zelazo2015robustness, chen2020adaptive, yang2021adaptive} that focus on the consensus problem in unconstrained, linear MASs, this study explores the formation tracking control problem in MASs with both input constraints and nonlinear dynamics. 
		Adaptive distributed observers are utilized not only to estimate the leader's state and dynamics but also the desired formation displacements of each agent relative to the leader. With the estimated real-time information as the reference, MPC is employed for the local controller design to achieve optimized control performance subject to the input constraints.
		\item By locally estimating tracking references through corresponding adaptive observers, the distributed formation tracking control task can be decoupled into several fully distributed tracking control problems at the local level. This facilitates the development of local controllers. Therefore, the integration of adaptive observers can significantly reduce the complexity of the distributed MPC formulation compared to other designs proposed in \cite{dunbar2006distributed,dunbar2007distributed,wei2024robust,wei2021robust,wei2019distributed}.
	\end{itemize}
	
	The remainder of this paper is structured as follows: Section \ref{6s: Problem Formulation} provides the mathematical formulation of the control problem and objective; Section \ref{6s: Control Design} elaborates on the distributed control design, presenting the adaptive observer and the MPC-based controller; Section \ref{6s: Stability Analysis} conducts the closed-loop analysis, evaluating the convergence of the estimation and the stability of the control system;  Section \ref{6s: Simulation Study} offers two simulation examples to validate the effectiveness of the proposed design; Finally, Section \ref{6s: Conclusions} summarizes this paper.
	
	Notations used in this paper are listed as follows. $\mathbb{R}$ and $\mathbb{R}^+$ denote the set of rational numbers and the set of positive rational numbers, respectively. $\mathbb{R}^n$ is the set of $n$-dimensional real column vectors, while $\mathbb{R}^{n\times m}$ is the set of $n \times m$ real matrices.  $\bm{x}^\top$ represents the transpose of $\bm{x}$.  $\|\bm{x}\|$ represents the standard Euclidean norm of $\bm{x}$ and $\|\bm{x}\|^2_{\bm{Q}}=\bm{x}^\top\!\bm{Q}\bm{x}$ is the weighed squared norm of $\bm{x}$. $\overline{\sigma}(\bm{A})$ and $\underline{\sigma}(\bm{A})$ represent the minimal and
	maximal eigenvalues of matrix $\bm{A}$, respectively. A diagonal matrix with $x_1, x_2,\cdots x_n$ being the diagonal elements is denoted by $\text{diag}(x_1, x_2,\cdots x_n)$, while a diagonal matrix with the elements of vector $\bm{x}$ on the diagonal is denoted by $\text{diag}(\bm{x})$. A diagonal matrix whose diagonal contains blocks of matrices $\bm{A}_1$,$\bm{A}_2$,$\cdots$,$\bm{A}_n$ is denoted by $\text{blkdiag}(\bm{A}_1,\bm{A}_2,\cdots,\bm{A}_n)$. The notation $\otimes$ is used to denote the Kronecker product. 
	
	\section{Problem Formulation}\label{6s: Problem Formulation}
	This section outlines the mathematical formulation of the control problem to tackle: multi-agent formation tracking control in the presence of communication faults. Firstly, we detail the dynamics models of individual agents and the virtual leader and describe their intercommunication through a directed weighted graph. Subsequently, the modeling of communication faults is presented. Finally, we introduce leader-follower tracking errors to formulate the control objective for formation tracking.

	\subsection{Multi-agent System}
	
	Consider a multi-agent system comprising $M$ followers and one virtual leader. The dynamics of both followers and the leader are detailed below, while their interactions are modeled using a weighted directed graph.

	\subsubsection{Follower Dynamics} The dynamics of the $i$th follower can be described by the following higher-order MIMO nonlinear model:
	\begin{align}\label{6eq: nonlinear systems}
		\left\{
		\begin{array}{rl}
			\dot{x}_{i,1}&\!\!=x_{i,2}\\
			&\vdots\\
			\dot{x}_{i,r\!-\!1}&\!\!=x_{i,r}\\
			\dot{x}_{i,r}&\!\!=f_i(x_i)+G_i(x_i)u_i\\
			y_i&\!\!=x_{i,1}
		\end{array}	\right.
	\end{align}
	where $x_i=\left[{x}_{i,1}^{\top}\ {x}_{i,2}^{\top}\ \cdots\ {x}_{i,r}^{\top}\right]^\top\in\mathbb{R}^{rn}$ is the system state vector with each segment $x_i^l\in\mathbb{R}^{n}$ for $l=1,2,\cdots,r$; $u_i=\left[u_{i,1}\ u_{i,2}\ \cdots\  u_{i,n}\right]\in\mathbb{R}^{n}$ and $y_i=\left[y_{i,1}\ y_{i,2}\ \cdots\  y_{i,n}\right]\in\mathbb{R}^{n}$ are the control input and system output, respectively; $f_i(x_i)=\left[f_{i,1}(x_i)\ f_{i,1}(x_i) \ \cdots\ \right.$ $\left. f_{i,n}(x_i)\right]^\top: \mathbb{R}^{rn}\rightarrow\mathbb{R}^n$ is a vector function, and   $G_i(x_i)=\left[g_{i,1}(x_i)\ g_{i,2}(x_i)\ \cdots g_{i,n}(x_i)\right]:\mathbb{R}^{rn}\rightarrow\mathbb{R}^{n\times n}$ is a square matrix function. To ensure that the system's behavior is predictable and well-behave around the origin, the following assumption is necessary and commonly employed
	\begin{asm}\label{6asm: Lipschitiz continuity}
		Al entries of $f_i(x_i)$ and $G_i(x_i)$ are sufficiently smooth and locally Lipschitz function of $x_i$ and satisfy $f_i(0)=0$ and $G_i(0)\neq0$. 
	\end{asm}
	
	\subsubsection{Communication Graph}The communication among these $M$ followers can be described using a directed weighted graph. Such graph can be represented by $\mathcal{G}=\{\mathcal{V},\mathcal{E}\}$. In this representation, $\mathcal{V}=\{1,2,\cdots,M\}$ is the set of nodes, with each node corresponding to a follower agent. $\mathcal{E}=\{(j,i)|i,j\in\mathcal{V},i\neq j\}$ is the set of edges and $(j,i)\in\mathcal{E}$ means there is a communication link from agent $j$ to agent $i$. Associated with this graph are two critical matrices. The adjacency matrix
	$\mathcal{A}=[a_{ij}]$ is defined such that $a_{ij}>0$ if $(j,i)\in\mathcal{E}$ and $a_{ij}=0$ otherwise. The Laplacian matrix $\mathcal{L}=[l_{ij}]$ is defined with  $l_{ii}=\sum_{j=1}^Ma_{ij}$ capturing the in-degree of node $i $ and $l_{ij}=-a_{ij}$ for $i\neq j$. 
	
	\subsubsection{Leader Dynamics and Connectivity}
	In addition to the follower agents, the system includes a virtual leader whose role is to guide the overall behavior of the MAS. The dynamics of this virtual leader can be governed by:
	\begin{align}\label{6eq: leader dynamics}
		\dot{\xi}_0=S_0\xi_0
	\end{align}
	where  $\xi_0=\left[\xi_{0,1}^{\top}\ {\xi}_{0,2}^{\top}\ \cdots\ \xi_{0,r}^{\top}\right]^\top\in\mathbb{R}^{rn}$ represents the state vector of the leader; $S_0\in\mathcal{R}^{rn\times rn}$ denotes the system dynamics matrix. 
	\begin{rmk}
		It is imperative that the leader’s state vector $\xi_0$ is of equivalent dimensionality to the followers' dynamics, ensuring it can serve as a reference for the followers' outputs. For instance, the $l$th segment of $\xi_0$ serves as the reference for $y_i^{(l)}$ of follower $i$.
	\end{rmk}

	Note that the state vector $\xi_0$ and the dynamics matrix $S_0$ of the leader are only accessible to certain followers. The leader can be labeled as node $0$, and the connections from this leader to the followers, labeled $1, 2, \cdots, M$, can be defined by a set of pinning edges $\mathcal{E}^0 = \{(0, i) | i \in \mathcal{V}\}$. An edge $(0, i) \in \mathcal{E}^0$ indicates that follower $i$ has direct access to the leader’s state and dynamics. Additionally, we introduce a pinning matrix $\mathcal{B} = {\rm diag}(b_1, b_2, \cdots, b_M)$, where $b_i > 0$ if $(0, i) \in \mathcal{E}^0$ and $b_i = 0$ otherwise. This matrix effectively quantifies the influence of the leader on each follower. 
	
	\subsection{Input Constraints and Communication Faults}
	In this work, we address both the input constraints of individual follower agents and unknown faults that may occur within the communication network. These considerations are crucial for ensuring the robustness and reliability of the system under various operational conditions. The mathematical models that incorporate these elements are provided below. 
	
	\subsubsection{Input Constraints}Considering the limitations on excitable control actions, the control input of the $i$th follower is restricted to a nonempty compact convex set, as defined by
	\begin{align}\label{6eq: input constraint}
		u_i\in\Omega_{u_i}\triangleq\left\{u_i\in\mathbb{R}^n|\ -\underline{u}_i\leqslant u_i\leqslant \overline{u}_i\right\}
	\end{align}
	where $\underline{u}_i=[\underline{u}_{i,1}\ \underline{u}_{i,2}\ \cdots\ \underline{u}_{i,n}]^\top\in\mathbb{R}^{n+}$ and $\overline{u}_i=[\overline{u}_{i,1}\ \overline{u}_{i,2}\ \cdots\ \overline{u}_{i,n}]^\top\in\mathbb{R}^{n+}$.  
	
	\subsubsection{Communication Faults}Communication faults can be modeled as time-varying uncertainties affecting the graph edges ~\cite{chen2020adaptive}:
	\begin{subequations}\label{6eq: communication fault}
		\begin{align}
			a^f_{ij}(t)&=a_{ij}+\vartheta^a_{ij}(t)\\
			b^f_{i}(t)&=b_i+\vartheta^b_{i}(t)
		\end{align}
	\end{subequations}
	where $a_{ij}$ and $b_i$ are the idea weights of general and pinning edges, and $\vartheta^a_{ij}$ and $\vartheta^b_{i}$ denote corrupted weights caused by communication faults. This fault model covers the following types of communication faults: 
	\begin{itemize}
		\item {Channel manipulation attack: }The unknown corrupted weights $\vartheta^a_{ij}$ and $\vartheta^b_{i}$ are capable of modeling a range of cyber attacks. The unknown corrupted weights $\vartheta^a_{ij}$ and $\vartheta^b_{i}$ can simulate various cyber attacks. These attacks involve infiltrating communication channels and manipulating the shared information between vehicles.
		\item {Fading channel: }The corrupted communication weights can also represent the effect of a fading channel, resulting in a decrease in the values of the communication weights.
	\end{itemize}
	
	\begin{rmk}
		The existence of $\vartheta^a_{ij}$ and $\vartheta^b_{i}$ introduces time-variation and uncertainty into the weights of the communication links. Consequently, in the event of communication link failures as specified in (\ref{6eq: communication fault}), both the Laplacian matrix and the pinning matrix of the directed graph $\mathcal{G}$ undergo modifications. Specifically, the Laplacian matrix is redefined as $\mathcal{L}^f(t) = [l^f_{ij}(t)]$, where $l^f_{ii}(t) = \sum_{j=1}^M a^f_{ij}(t)$ for the diagonal elements, and $l^f_{ij}(t) = -a^f_{ij}(t)$ for off-diagonal elements with $i \neq j$. Similarly, the pinning matrix is revised to $\mathcal{B}^f(t) = \text{diag}(b_1^f(t), b_2^f(t), \dots, b_M^f(t))$. 
	\end{rmk}
	
	\begin{asm}\label{6asm: communication link fault}
		The communication link faults $\vartheta^a_{ij}(t)$ and $\vartheta^b_{i}(t)$ in the directed graph, as well as their derivatives, are bounded. In addition, the signs of $a^f_{ij}(t)$ and $b^f_{i}(t)$ are the same to that of $a_{ij}$ and $b_i$. 
	\end{asm}
	
	\begin{rmk}
		Assumption \ref{6asm: communication link fault}, as also utilized in \cite{li2019robust,chen2020adaptive}, ensures the boundedness of communication faults and maintains the invariance of the network structure despite these faults. The modeling of communication link faults in (\ref{6eq: communication fault}) under Assumption \ref{6asm: communication link fault} can cover various types of communication faults and cyber attacks with bounded derivatives, such as bias attacks and fading channels.
	\end{rmk}
	
	\subsection{Formation Tracking Control Objective}
	Having modeled the MAS, taking into account input constraints and communication faults, we now proceed to formulate the formation tracking control objective. Formation refers to a specific spatial shape maintained by the followers, which is typically defined prior to executing any formation control.  To delineate a formation task, we assign each follower in the system a specific formation displacement relative to the virtual leader, denoted as $\varDelta_i$ for $i=1,2,\cdots, M$. Furthermore, let $x=\left[x_1^\top\ x_2^\top\ \cdots\ x_M^\top\right]^\top\in\mathbb{R}^{rnM}$ represent the collective state vector of all the followers. The state of the leader $0$ is extended correspondingly as ${\xi}=1_M\otimes \xi_0\in\mathbb{R}^{rnM}$. Then, a global formation tracking error can be defined
	\begin{align}\label{6eq: formation error}
		\tilde{x}=\begin{bmatrix}
			\tilde{x}_1^\top\ \tilde{x}_2^\top\ \cdots\ \tilde{x}_M^\top
		\end{bmatrix}^\top=x-{\xi}-\varDelta
	\end{align}
	where $\varDelta=\left[\varDelta_1^\top\ \varDelta_2^\top\ \cdots\ \varDelta_M^\top\right]^\top\in\mathbb{R}^{rnM}$ is the collective formation displacement vector.

	\begin{asm}
		To define a practical formation task, the displacement vector $\varDelta_i$, which encodes the desired offset between $x_i=\left[y_i^\top\ \dot{y}_i^\top\ \cdots\ y_i^{(r-1)\top}\right]^\top$ and $\xi_0$, is structured as $\left[\varDelta_{y_i}^\top\ \dot{\varDelta}_{y_i}^\top\ \cdots\ \varDelta_{y_i}^{(r-1)\top}\right]^\top$. A default assumption is that $\varDelta_{y_i}\in\mathbb{R}^n$ should be at least $(r-1)$-times differentiable, and the $r$th derivative, $\varDelta_{y_i}^{(r)}$, is considered to be zero.
	\end{asm}
	
	\begin{rmk}
		Given the previously defined directed communication topology, the knowledge of the leader's state and dynamics, as well as the desired formation displacement information, is not required to be globally known across the MAS. Only agents directly connected to the virtual leader have access to the real-time values of $\xi_0$ and the respective $\varDelta_i$. Agents that do not have a direct communication link with the virtual leader are only required to store and transmit the displacement vector relative to their out-neighbors, defined as $\varDelta_{ij}=\varDelta_i-\varDelta_j$, along with their state measurement $x_j$, to their designated out-neighbor node $j\in \mathcal{N}_i^+$ via the communication links. 
	\end{rmk}

	The control objective of this study is to develop a distributed control strategy that utilizes solely locally available neighborhood information for effective formation tracking of a MAS composed of $M$ followers (\ref{6eq: nonlinear systems}) and a virtual leader (\ref{6eq: leader dynamics}). The primary goal is to ensure that the global formation error $\tilde{x}$ not only converges to, but also remains within a small region near the origin. To ensure such formation tracking control objective is achievable, the following assumption of the graph topology holds throughout this paper.
	\begin{asm}\label{6asm: spanning tree}
		In the directed graph $\mathcal{G}$, each node is either part of a spanning tree with the root node connected to the virtual leader or a standalone node directly connected to the virtual leader.
	\end{asm}
	\begin{rmk}
		The above assumption ensures that there is either direct or indirect connectivity between each follower and the leader, providing a directed path from the leader to all the followers. This network structure is critical for achieving synchronized behaviors among the agents, enabling the distributed control strategies to eliminate the formation error across the system effectively.
	\end{rmk}
	
	\begin{lemt}{\rm~\cite{ren2011distributed}}\label{2lemt: graph 2}
		Let $\mathcal{G}$ be the directed graph for $M$ followers, labeled as agents or followers $1$ to $M$. Let $\mathcal{L}$ be the nonsymmetric Laplacian matrix associated with the directed graph $\mathcal{G}$. Suppose that in addition to the $M$ followers, there exists a leader, labeled as agent $0$, whose connection to the  $M$ followers can be described by a pinning matrix $\mathcal{B}=\text{diag}(b_1,b_2,\cdots,$ $b_M)$, where $b_i>0$ if the $i$th follower can receive information from the leader and $b_i=0$ otherwise. Let $\mathcal{L_B}=\mathcal{L}+\mathcal{B}$. Then, all eigenvalues of $\mathcal{L_B}$ have positive real parts if and only if in the directed graph $\mathcal{G}$ the leader has directed paths to all followers. 
	\end{lemt}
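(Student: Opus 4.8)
The plan is to reduce the statement to the classical spectral characterization of Laplacians of directed graphs possessing a directed spanning tree. First I would introduce the \emph{augmented graph} $\bar{\mathcal{G}}$ on the node set $\{0,1,\dots,M\}$, obtained from $\mathcal{G}$ by adjoining the leader node $0$ together with the pinning edges $(0,i)$ of weight $b_i$ for every $i$ with $b_i>0$. Since the leader has no incoming edge, the $(M+1)\times(M+1)$ in-degree Laplacian of $\bar{\mathcal{G}}$ has the block form
\begin{align*}
\bar{\mathcal{L}}=\begin{bmatrix} 0 & 0_{1\times M}\\ -\mathcal{B}1_M & \mathcal{L_B}\end{bmatrix},
\end{align*}
so that $\det(\lambda I_{M+1}-\bar{\mathcal{L}})=\lambda\,\det(\lambda I_M-\mathcal{L_B})$; hence the spectrum of $\bar{\mathcal{L}}$ is the eigenvalue $0$ contributed by the leader block together with the $M$ eigenvalues of $\mathcal{L_B}$, counted with multiplicity.

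Second, I would invoke the standard result (as in \cite{ren2011distributed}) that the Laplacian of an $n$-node directed graph has a simple zero eigenvalue with all other eigenvalues in the open right half-plane if and only if the graph contains a directed spanning tree, and otherwise has rank at most $n-2$ (so that $0$ is an eigenvalue of multiplicity at least two). The observation connecting this to the hypothesis is that node $0$ has no in-neighbours in $\bar{\mathcal{G}}$, so any directed spanning tree of $\bar{\mathcal{G}}$ must be rooted at node $0$; therefore $\bar{\mathcal{G}}$ has a directed spanning tree precisely when the leader has directed paths to all $M$ followers.

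Combining the two ingredients gives both implications. If the leader reaches every follower, then $\bar{\mathcal{G}}$ has a spanning tree, so $0$ is a \emph{simple} eigenvalue of $\bar{\mathcal{L}}$ while its other $M$ eigenvalues lie in the open right half-plane; as the single copy of $0$ is already supplied by the leader block, $0\notin\mathrm{spec}(\mathcal{L_B})$, and hence all eigenvalues of $\mathcal{L_B}$ — being exactly those remaining $M$ eigenvalues of $\bar{\mathcal{L}}$ — have positive real parts. Conversely, if some follower is unreachable from the leader, $\bar{\mathcal{G}}$ has no spanning tree, so $\mathrm{rank}(\bar{\mathcal{L}})\le M-1$; since the columns of $\bar{\mathcal{L}}$ associated with the followers already contain the full $M\times M$ block $\mathcal{L_B}$, this forces $\mathcal{L_B}$ to be singular, so $0\in\mathrm{spec}(\mathcal{L_B})$ and not all of its eigenvalues have positive real parts.

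I expect the only delicate point to be the converse direction, and if a self-contained argument is preferred it can be obtained without the augmented graph: let $U\subseteq\{1,\dots,M\}$ be the nonempty set of followers unreachable from the leader. No edge enters $U$ from outside $U$ and $b_i=0$ for all $i\in U$, so after reordering the followers as (reachable, unreachable) the matrix $\mathcal{L_B}$ becomes block upper-triangular with trailing diagonal block equal to the in-degree Laplacian of the subgraph induced on $U$; that block annihilates $1_{|U|}$, hence $\det\mathcal{L_B}=0$. The genuinely external input is the spectral-location half of the spanning-tree lemma — that the eigenvalues away from the origin lie strictly in the open right half-plane — which I would cite rather than reprove; the remaining steps, namely the block-triangular determinant and rank identities and the remark on the root of the tree, are routine.
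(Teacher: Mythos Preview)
Your proposal is correct and follows the standard reduction to the spanning-tree spectral lemma for directed Laplacians. Note, however, that the paper does not supply its own proof of this lemma: it is stated as a cited result from \cite{ren2011distributed}, so there is no in-paper argument to compare against. Your augmented-graph construction and the block-triangular determinant identity are exactly the usual route to this fact; the self-contained converse via the unreachable set $U$ is a nice addition and is arguably cleaner than invoking the rank bound on $\bar{\mathcal{L}}$, since the latter step as you phrased it (``the columns associated with the followers already contain the full block $\mathcal{L_B}$'') implicitly uses that the first row of $\bar{\mathcal{L}}$ is zero so that $\mathrm{rank}\,\mathcal{L_B}\le\mathrm{rank}\,\bar{\mathcal{L}}$ --- worth making explicit if you keep that version.
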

	
	\section{Distributed Control Design}\label{6s: Control Design}
	In this section, we present the design of an adaptive distributed MPC framework for addressing the formation tracking control problem with input constraints and communication faults.  This framework integrates state observers with MPC controllers via a distributed structure. As illustrated in Figure \ref{6fig: diagram}, each follower's local control system operates independently and relies exclusively on locally available information, consisting of an adaptive observer for estimating the leader information with resilience to communication link faults and an MPC-based controller for determining optimal formation tracking control actions online based on the local estimation of leader's state, dynamics matrix and desired displacement vector. Let us clarify and elaborate on each component in the subsequent subsections.
	\begin{figure}[t]
		\centering
		\includegraphics[width=\columnwidth]{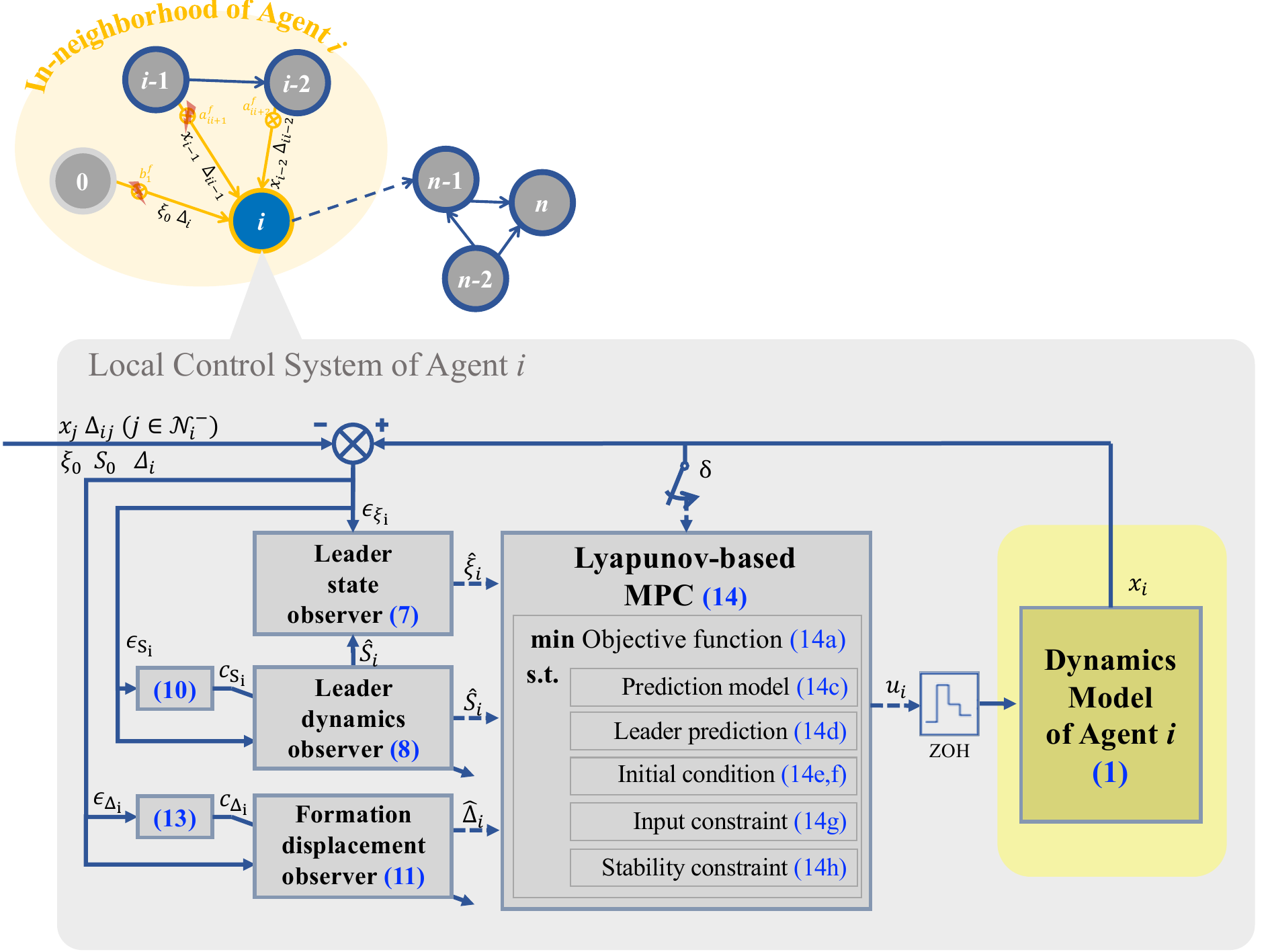}
		\caption{Detailed view of agent $i$'s local control system in the distributed network}
		\label{6fig: diagram}
	\end{figure}
	
	\subsection{Adaptive Leader Observer}
	Given the limitations on the availability of direct, real-time access to the state and dynamics of the virtual leader among all followers in the network, it becomes essential to develop an adaptive distributed observer within each local control system. This observer is responsible for estimating the leader's information and the formation displacement, which are critical components for effective formation tracking controller design.
	
	The locally estimated leader state for follower $i$ is denoted as $\hat{\xi}_i$. We can then define a  leader state estimation error as
	\begin{align}
		\epsilon_{\xi_i}&=\displaystyle\sum_{j=1}^Ma_{ij}(t)\left(\hat{\xi}_i-\hat{\xi}_j\right)+b_i(t)\left(\hat{\xi}_i-\xi_0\right)
	\end{align}
	which is available for the local control system of follower $i$. The distributed adaptive leader state observer is then designed as
	\begin{align}\label{6eq: distributed observer 1}
		\dot{\hat{\xi}}_i&=\hat{S}_i\hat{\xi}_i-c_{\xi_i}\epsilon_{\xi_i}
	\end{align}
	where $c_{\xi_i}$ is a user-designed positive observation gain. 
	
	In (\ref{6eq: distributed observer 1}), $\hat{S}_i$ is the estimate of the leader's dynamics matrix $S_0$, updated following the following estimating law
	\begin{align}\label{6eq: distributed observer 2}
		\dot{\hat{S}}_i&=-\left(c_{S_i}+\dot{c}_{S_i}\right)\epsilon_{S_i}
	\end{align}
	where ${\epsilon}_{S_i}$ is the local estimation error for $S_0$, defined as
	\begin{align}
		\epsilon_{S_i}&=\displaystyle\sum_{j=1}^Ma_{ij}(t)\left(\hat{S}_i-\hat{S}_j\right)+b_i(t)\left(\hat{S}_i-S_0\right)
	\end{align}
	In (\ref{6eq: distributed observer 2}), ${c}_{S_i}$ satisfying ${c}_{S_i}(0)\geqslant1$, is updated by
	\begin{align}
		\dot{c}_{S_i}&=\vec{\epsilon}_{S_i}^\top\vec{\epsilon}_{S_i}
	\end{align}
	with $\vec{\epsilon}_{S_i}={\rm vec}({\epsilon}_{S_i})$ being the vector form of the matrix $\epsilon_{S_i}$. The operation ${\rm vec}(\cdot)$ rearranges the matrix segments into a column vector.

	Similarly, let $\hat{\varDelta}_i$ denote the estimate of the desired formation displacement $\varDelta_i$. Its estimating law is
	\begin{align}\label{6eq: distributed observer 3}
		\dot{\hat{\varDelta}}_i&=-\left(c_{\varDelta_i}+\dot{c}_{\varDelta_i}\right)\epsilon_{\varDelta_i}
	\end{align}
	where $\epsilon_{{\varDelta}_i}$ is the local estimation errors for $\varDelta_i$, defined as
	\begin{align}
		\epsilon_{{\varDelta}_i}&=\displaystyle\sum_{j=1}^M a_{ij}(t)\left(\hat{{\varDelta}}_i-\hat{{\varDelta}}_j-{\varDelta}_{ij}\right)+b_i(t)\left(\hat{{\varDelta}}_i-{\varDelta}_{i}\right)
	\end{align}
	with ${\varDelta}_{ij}={\varDelta}_{i}-{\varDelta}_{j}$ being the desired relative displacement between follower $i$ and $j$. In (\ref{6eq: distributed observer 3}), ${c}_{\varDelta_i}$ satisfies ${c}_{\varDelta_i}\geqslant1$ and follows the following adaptive law:
	\begin{align}
		\dot{c}_{\varDelta_i}&=\epsilon_{\varDelta_i}^\top\epsilon_{\varDelta_i}
	\end{align}
	
	\subsection{MPC-based Formation Tracking Controller}
	
	With the local estimation of the leader's state $\hat{\xi}_i$, the leader's dynamics matrix $\hat{S}_i$, and the desired formation displacement vector $\hat{\varDelta}_i$, we can move on to the development of the formation tracking controller. 
	
	By applying model predictive control, a finite-horizon constrained optimization problem for a forward-looking prediction horizon $T\in\mathbb{R}^+$ is solved at each control update instant. The optimization solution is then implemented to the plant system recedingly, under a sampled-and-hold manner for execution. The control update instant sequence is $\{t_k = \delta k|k \in \{0, 1, 2, \cdots \}\}$, where $\delta<T\in\mathbb{R}^+$ represents the sampling period. 
	
	At the time instant $t_k$, the MPC optimization problem is formulated as 
	\begin{subequations}\label{6eq: MPC Problem}
		\begin{align}
			\min_{u_i^p(t|t_k)}\left( \int_{t_k}^{t_k+T}\left\|s^p_i(t|t_k)\right\|^2_{Q_i}+\left\|u_i^p(t|t_k)\right\|^2_{R_i} \right)\text{d} t
		\end{align}
		with
		\begin{align}
			\!\!\!s^p_i(t|t_k)\!=&\!\sum_{l=0}^{r-2}\lambda_{i,l}\left(x^p_{i,l+1}(t|t_k)\!-\!\xi_{0,l+1}^p(t|t_k)\!-\!\hat{\varDelta}_{i,l+1}(t_k)\right)\nonumber\\&\!+\!\left(x^p_{i,r}(t|t_k)\!-\!\xi_{0,r}^p(t|t_k)\!-\!\hat{\varDelta}_{i,r}(t_k)\right)\!\!\!
		\end{align}
		subject to
		\begin{align}
			&	\hspace{-3mm}\left\{\!\!\!
			\begin{array}{rl}
				\dot{x}_{i,1}^p(t|t_k)&\!\!\!=x_{i,2}^p(t|t_k)\\
				&\!\vdots\\
				\dot{x}_{i,r\!-\!1}^p(t|t_k)&\!\!\!=x_{i,r}^p(t|t_k)\\
				\dot{x}_{i,r}^p(t|t_k)&\!\!\!=f_i(x_i^p(t|t_k))+G_i(x_i^p(t|t_k))u_i^p(t|t_k)
			\end{array}	\right.\label{6eq: prediction model}\\
			&\hspace{6mm}\dot{\xi}_0^p(t|t_k)=\hat{S}_i(t_k){\xi}_0^p(t|t_k)\label{6eq: leader model}\\
			&\hspace{4mm}x_i^p(t_k|t_k)=x_i(t_k)
			\label{6eq: initial condition 1}\\
			&\hspace{4mm}\xi_0^p(t_k|t_k)=\xi_i(t_k)\label{6eq: initial condition 2}\\
			&\hspace{5mm}u^p_i(t|t_k)\in \Omega_{u}\label{6eq: control constraint}
		\end{align}
		\begin{align}
			&{s^p_i}^\top\!\!(t_k|t_k)\Bigl(\sum_{l=0}^{r\!-\!2}\lambda_{i,l}\left(x_{i,l+2}^p(t_k|t_k)\!-\!\xi_{0,l+2}^p(t_k|t_k)\!-\!\hat{\varDelta}_{i,l+1}(t_k)\right)\nonumber\\
			&+f_i(x_i^p(t_k|t_k)) +G_i(x_i^p(t_k|t_k))u_i^p(t_k|t_k)-\dot{\xi}^p_{0,r}(t_k|t_k)  \Bigr)\nonumber\\
			&\leqslant-c_i \|{s_i^p}(t_k|t_k)\|^2\label{6eq: Lyapunov constraint}
		\end{align}
	\end{subequations}
	where $t\in[t_k,t_k+T]$, and the internal variables are denoted by a superscript $p$ to distinguish them from the actual system signals. In the optimization problem (\ref{6eq: MPC Problem}), constraint (\ref{6eq: prediction model}) serves as the prediction model to predict the future evolution of the follower itself, while (\ref{6eq: leader model}) is to predict the leader's behavior by making use of $\hat{S}_i$ estimated by the observer. Constraints (\ref{6eq: initial condition 1}) and (\ref{6eq: initial condition 2}) specify the initial conditions of the prediction models (\ref{6eq: prediction model}) and (\ref{6eq: leader model}) , respectively. Compliance with the input constraint is ensured by (\ref{6eq: control constraint}). The Lyapunov-based stability constraint (\ref{6eq: Lyapunov constraint}) is designed to enforce the decay of the Lyapunov function at the current instant $t_k$.
	
	\begin{lemt}\label{6lemt: feasibility}
		There always exists a feasible solution to the optimization problem (\ref{6eq: MPC Problem}), constructed as
		\begin{subequations}\label{6eq: feasible solution}
			\begin{align}\label{6eq: feasible solution1}
				u_i^0(t|t_k)=& {\rm sat}\left(v_i(t|t_k),\underline{u}_i,\overline{u}_i\right)\\\label{6eq: feasible solution2}
				v_i(t|t_k)=& G_i^{-1}(x_i^p(t|t_k))\left(v_{i}^a(t|t_k)+v_{i}^d(t|t_k)\right)\\\label{6eq: feasible solution3}
				v_{i}^a(t|t_k)=& -c_is_i^p(t|t_k)\!-\!\sum_{l=0}^{r-2}\lambda_{i,l}\!\left(x_{i,l+2}^p(t|t_k)\!-\!\xi_{0,l+2}^p(t|t_k)\!\right.\nonumber\\
				&\left.-\!\hat{\varDelta}_{i,l+1}(t_k)\right)\!-\!f_i(x_i^p(t|t_k))\!+\!\dot{\xi}^p_{0,r}(t|t_k) \\\label{6eq: feasible solution4}
				v_{i}^d(t|t_k)=&\ -k_{s_i}\underline{\chi}_i^{-1}\text{sgn}\left(s_i^p(t|t_k)\right)
			\end{align}
		\end{subequations}
		for $t\in[t_k, t_k+T]$. In (\ref{6eq: feasible solution1}), ${\rm sat}\left(v_i(t),\underline{u}_i,\overline{u}_i\right)$ is a saturation function defined as
		written as
		\begin{align}\label{3eq: saturation function}
			{\rm sat}\left(v_i(t),\underline{u}_i,\overline{u}_i\right)&=\chi\left(v_i(t),\underline{u}_i,\overline{u}_i\right)v_i(t)
		\end{align}
		with
		\begin{align*}
			\chi\left(v_i(t),\underline{u}_i,\overline{u}_i\right)&\!=\!\text{diag}\left(
			\chi_j\left(v_{i_j}\!(t),\underline{u}_{i_j},\overline{u}_{i_j}\right)\right)\\
			\chi_j\left(v_{i_j}\!(t),\underline{u}_{i_j},\overline{u}_{i_j}\right)&\!=\!\left\{
			\begin{array}{cl}
				\frac{\overline{u}_{i_j}}{v_{i_j}\!(t)}, &\text{if }v_{i_j}\!(t)\geqslant \overline{u}_{i_j}\\
				1, &\text{if }-\underline{u}_{i_j}< v_{i_j}\!(t)< \overline{u}_{i_j}\\
				-\frac{\underline{u}_{i_j}}{v_{i_j}\!(t)}, &\text{if }v_{i_j}\!(t)\leqslant -\underline{u}_{i_j}
			\end{array}
			\right.
		\end{align*} 
		in which $j=1,2,\cdots,n$; $v_{c_i}$,  $\underline{u}_{i_j}$ and  $\overline{u}_{i_j}$ being the $j$th element of $v_i$, $\underline{u}_i$ and $\overline{u}_i$. $\chi_i\left(v_i,\underline{u}_i,\overline{u}_i\right)$ is the control input saturation degree indicator, and it is clear that all of its diagonal elements vary between $(0,1]$. $v_i^d$ in (\ref{6eq: feasible solution4}) is the discontinuous portion of the feasible input,  where $\underline{\chi}_i\in\mathbb{R}^{n\times n}$ is the lower bound of $\chi\left(v_i,\underline{u}_i,\overline{u}_i\right)$ satisfying
		\begin{align}\label{6eq: chi}
			O_n<\underline{\chi}_i\leqslant\chi_i\left(v_i,\underline{u}_i,\overline{u}_i\right)\leqslant I_n
		\end{align}
		The user-defined positive parameter $k_{s_i}$ is chosen appropriately such that 
		\begin{align}\label{6eq: ks}
			\left\|\left(\chi_i\left(v_i,\underline{u}_i,\overline{u}_i\right)-I_n\right)v_i^a\right\|_1\leqslant k_{s_i}
		\end{align}
	\end{lemt}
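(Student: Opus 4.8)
The plan is to verify that the explicitly constructed control law $u_i^0(t|t_k)$ in (\ref{6eq: feasible solution}) satisfies every constraint of the optimization problem (\ref{6eq: MPC Problem}). Since the prediction model constraints (\ref{6eq: prediction model}) and (\ref{6eq: leader model}) together with the initial conditions (\ref{6eq: initial condition 1})--(\ref{6eq: initial condition 2}) are satisfied automatically by propagating the ODEs from the measured/estimated initial data under any admissible input, and the input constraint (\ref{6eq: control constraint}) holds by construction because $u_i^0 = {\rm sat}(v_i,\underline{u}_i,\overline{u}_i) \in \Omega_u$, the only nontrivial thing to check is the Lyapunov-based stability constraint (\ref{6eq: Lyapunov constraint}) at $t=t_k$. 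So the bulk of the proof is a single inequality verification at the current time instant.

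First I would substitute the candidate input $u_i^0(t_k|t_k)$ into the left-hand side of (\ref{6eq: Lyapunov constraint}). Using the saturation decomposition (\ref{3eq: saturation function}), write $u_i^0 = \chi_i v_i = v_i + (\chi_i - I_n)v_i$, and recall that $v_i = G_i^{-1}(v_i^a + v_i^d)$, so that $G_i(x_i^p) u_i^0(t_k|t_k) = v_i^a(t_k|t_k) + v_i^d(t_k|t_k) + G_i(x_i^p)(\chi_i - I_n)v_i$. Plugging the definition (\ref{6eq: feasible solution3}) of $v_i^a$ into the bracketed term of (\ref{6eq: Lyapunov constraint}), the terms $\sum_l \lambda_{i,l}(x_{i,l+2}^p - \xi_{0,l+2}^p - \hat{\varDelta}_{i,l+1})$, $f_i(x_i^p)$, and $\dot{\xi}^p_{0,r}$ cancel exactly, leaving the bracket equal to $-c_i s_i^p(t_k|t_k) + v_i^d(t_k|t_k) + G_i(x_i^p)(\chi_i - I_n)v_i$. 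The left-hand side of (\ref{6eq: Lyapunov constraint}) therefore becomes ${s_i^p}^\top(t_k|t_k)\bigl(-c_i s_i^p(t_k|t_k) + v_i^d(t_k|t_k) + G_i(x_i^p)(\chi_i - I_n)v_i\bigr) = -c_i\|s_i^p(t_k|t_k)\|^2 + {s_i^p}^\top v_i^d + {s_i^p}^\top G_i(\chi_i-I_n)v_i$.

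It then remains to show the cross terms ${s_i^p}^\top v_i^d + {s_i^p}^\top G_i(\chi_i - I_n)v_i \le 0$. Here I would use the definition (\ref{6eq: feasible solution4}) $v_i^d = -k_{s_i}\underline{\chi}_i^{-1}{\rm sgn}(s_i^p)$, giving ${s_i^p}^\top v_i^d = -k_{s_i}\sum_j \underline{\chi}_{i,jj}^{-1}|s_{i,j}^p|$, which is a negative quantity dominated (componentwise, after using $\underline{\chi}_i \le I_n$ so that $\underline{\chi}_{i,jj}^{-1} \ge 1$) by $-k_{s_i}\|s_i^p\|_1$. For the second cross term, bound ${s_i^p}^\top G_i(\chi_i - I_n)v_i \le \|s_i^p\|_\infty \|G_i(\chi_i - I_n)v_i\|_1$; the slightly delicate point is relating $G_i(\chi_i - I_n)v_i$ back to $(\chi_i - I_n)v_i^a$ as appears in the design condition (\ref{6eq: ks}). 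The cleanest route — which I expect is what the authors intend — is to note that the saturation indicator $\chi_i$ commutes with diagonal scalings and that the bound (\ref{6eq: ks}) is precisely engineered so that $\|(\chi_i - I_n)v_i^a\|_1 \le k_{s_i}$; combined with $\|s_i^p\|_\infty \le \|s_i^p\|_1$, this yields ${s_i^p}^\top G_i(\chi_i-I_n)v_i \le k_{s_i}\|s_i^p\|_1$, which exactly cancels the $-k_{s_i}\|s_i^p\|_1$ coming from $v_i^d$. Hence the two cross terms sum to at most zero, and the left-hand side of (\ref{6eq: Lyapunov constraint}) is bounded above by $-c_i\|s_i^p(t_k|t_k)\|^2$, establishing feasibility.

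The main obstacle I anticipate is the bookkeeping in the second cross term: carefully tracking how the matrix gain $G_i(x_i^p)$, the saturation degree matrix $\chi_i$, its lower bound $\underline{\chi}_i$, and the $v_i^a$ versus $v_i = G_i^{-1}(v_i^a+v_i^d)$ distinction interact, and confirming that the intended inequality chain (\ref{6eq: chi})--(\ref{6eq: ks}) genuinely closes the gap without a hidden dependence on $v_i^d$ itself (since $v_i^d$ enters $v_i$). If a strict version is needed one may have to argue that when no saturation is active $\chi_i = I_n$ and the cross term vanishes outright, while when saturation is active the magnitude of the excess is exactly what (\ref{6eq: ks}) controls; in either case the bound holds. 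The remaining constraints require only the standard observation that a solution of the prediction ODEs exists on $[t_k, t_k+T]$ under the bounded, piecewise-continuous (indeed the ${\rm sgn}$ term makes it only measurable, but that suffices for a Carath\'eodory solution) input $u_i^0$, which I would state briefly and invoke Assumption \ref{6asm: Lipschitiz continuity} for local existence, extended over the finite horizon.
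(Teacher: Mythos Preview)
Your overall strategy matches the paper's: the input constraint is immediate from the saturation, and the only nontrivial check is the Lyapunov constraint (\ref{6eq: Lyapunov constraint}) at $t=t_k$. The difference lies in the algebraic decomposition, precisely at the spot you flag as the main obstacle.

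The paper does \emph{not} split $u_i^0 = v_i + (\chi_i - I_n)v_i$ as you do. Instead it writes $G_i u_i^0 = \chi_i(v_i^a + v_i^d)$ directly (implicitly letting the diagonal $\chi_i$ commute past $G_i$, which is the bookkeeping point you anticipated), and then treats the two pieces separately. The piece $\chi_i v_i^a = v_i^a + (\chi_i - I_n)v_i^a$ leaves the residual ${s_i^p}^\top(\chi_i - I_n)v_i^a$. The piece $\chi_i v_i^d = -k_{s_i}\,\chi_i\underline{\chi}_i^{-1}\,\text{sgn}(s_i^p)$ is handled by the add--subtract $\chi_i\underline{\chi}_i^{-1} = (\chi_i\underline{\chi}_i^{-1} - I_n) + I_n$: by (\ref{6eq: chi}) one has $\chi_i\underline{\chi}_i^{-1}\geqslant I_n$, so $-k_{s_i}{s_i^p}^\top(\chi_i\underline{\chi}_i^{-1} - I_n)\text{sgn}(s_i^p)\leqslant 0$, while the remaining $-k_{s_i}{s_i^p}^\top\text{sgn}(s_i^p)=-k_{s_i}\|s_i^p\|_1$ dominates ${s_i^p}^\top(\chi_i - I_n)v_i^a \leqslant \|(\chi_i - I_n)v_i^a\|_1\|s_i^p\|_1 \leqslant k_{s_i}\|s_i^p\|_1$ via (\ref{6eq: ks}). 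This closes the inequality.

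Your decomposition instead produces ${s_i^p}^\top G_i(\chi_i - I_n)v_i$, which still contains $v_i^d$ through $v_i$ and hence an uncontrolled factor $\underline{\chi}_i^{-1}$; that is exactly why you cannot finish cleanly and have to hand-wave at the end. The paper's device of keeping $\chi_i$ attached to $v_i^d$, so that only the \emph{bounded} product $\chi_i\underline{\chi}_i^{-1}$ ever appears, is what resolves the obstacle you correctly identified.
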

	
	\begin{proof}
		Due to the inclusion of the saturation function, the control profile constructed in (\ref{6eq: feasible solution}) naturally satisfies the input constraint (\ref{6eq: control constraint}). Substituting $u_i^0(t_k|t_k)$ into the right half of the inequality in (\ref{6eq: Lyapunov constraint}) gives
		\begin{align}
			&{s^p}_i^\top(t_k|t_k) \biggl(\sum_{i=0}^{r-2}\!\lambda_{i}\!\left(\!x^p_{i,l+2}\!(t_k|t_k)\!-\!{\xi^p_{0,l+2}}\!(t_k|t_k)\!-\!\hat{\delta}_{i,l+1}(t_k)\right)\nonumber\\
			&\!+\!f_i({x}_i^p(t_k|t_k))+G_i({x}_i^p(t_k|t_k))u_i^0(t_k|t_k)-\dot{\xi^p_{0,r}}(t_k|t_k)\biggr)\nonumber\\
			&=\!-\!c_i\|s^p_i\!(t_k|t_k)\|\!-\!k_{s_i}{s_i^p}^\top\!(t_k|t_k)\!\!\left(\!\chi_i\left(v_i(t_k),\underline{u}_i,\overline{u}_i\right)\underline{\chi}_i^{-1}\!-\!I_n\!\right)\nonumber\\
			&\text{sgn}(s_i^p(t_k|t_k))-k_{s_i}{s_i^p}^\top(t_k|t_k)\text{sgn}(s_i^p(t_k|t_k))+{s_i^p}^\top(t_k|t_k)\nonumber\\
			&\left(\chi_i\left(v_i(t_k),\underline{u}_i,\overline{u}_i\right)-I_n\right)v_i^a(t_k)
		\end{align}
		Recalling inequality (\ref{6eq: chi}), it can be obtained that $ I_n\leqslant\chi_i\left(v_i(t_k),\underline{u}_i,\overline{u}_i\right)\underline{\chi}_i^{-1}$.  Further with (\ref{6eq: ks}), we can have that
		\begin{align}
			&{s^p}_i^\top(t_k|t_k) \biggl(\sum_{i=0}^{r-2}\!\lambda_{i}\!\left(\!x^p_{i,l+2}\!(t_k|t_k)\!-\!{\xi^p_{0,l+2}}\!(t_k|t_k)\!-\!\hat{\delta}_{i,l+1}(t_k)\right)\nonumber\\
			&\!+\!f_i({x}_i^p(t_k|t_k))+G_i({x}_i^p(t_k|t_k))u_i^0(t_k|t_k)-\dot{\xi^p_{0,r}}(t_k|t_k)\biggr)\nonumber\\
			&\leqslant-c_i\|s_i^p(t_k|t_k)\|-k_{s)i}\|{s_i^p}(t_k|t_k)\|_1+\|\left(\chi_i\left(v_i(t_k),\underline{u}_i,\overline{u}_i\right)\right.\nonumber\\
			&\left.-I_n\right)v_i^a(t_k)\|_1\|{s_i^p}(t_k|t_k)\|_1\nonumber\\
			&\leqslant-c_i\|s_i^p(t_k|t_k)\|
		\end{align}
		which satisfies the Lyapunov-based stability constraint (\ref{6eq: Lyapunov constraint}). Therefore, it can be concluded that $u_i^0$ is a feasible solution to the optimization problem (\ref{6eq: MPC Problem}). 
	\end{proof}
	
	Given the feasibility of the optimization problem (\ref{6eq: MPC Problem}),  an optimal control profile $u_i^*(t|t_k)$ for $t\in[t_k,t_k+T]$ can always be found by solving it at $t_k$. The found optimal solution is then implemented in a receding horizon manner. In this regard, $u_i^*(t|t_k)$ is applied to the $i$th follower until the next measurement is available, so the actual control command $u_i(t)$ for $t\in[t_k,t_{k+1})$ is
	\begin{align}\label{6eq: actual control}
		u_i(t)=u_i^*(t_k|t_k)
	\end{align}
	When the new measurement is updated at $t_{k+1}$, the optimization problem (\ref{6eq: MPC Problem}) will be solved again with $t_k$ replaced by $t_{k+1}$, and a new optimal control profile $u_i^*(t|t_{k+1})$ for $t\in[t_{k+1}, t_{k+1}+T]$ will be found. In turn, the newly found optimal control profile updates the actual control command $u_i(t)$ for $t\in[t_{k+1}, t_{k+2})$. 
	
	Finally, we can have that with the initial conditions (\ref{6eq: initial condition 1} and \ref{6eq: initial condition 2}) specified and the stability condition (\ref{6eq: Lyapunov constraint}) satisfied, the following inequality holds for $t\in[t_k,t_{k+1})$
	\begin{align}\label{6eq: inequality}
		&{s}_i^\top\!\!(t_k) \!\biggl(\sum_{l=0}^{r-2}\lambda_{l}\left( x_{i,l+2}(t_k)\!-\!\hat{\xi}_{i,l+2}(t_k)\!\right)\!
		\!+\!f_i({x}_i(t_k))\!+\!G_i({x}(t_k))\nonumber\\&u_i(t)\!-\!\dot{\hat{\xi}}_{i,r}(t_k)\biggr)\leqslant -c_i\left\|{s}_i(t_k)\right\|^2
	\end{align}
	
	\section{Closed-loop Stability Analysis}\label{6s: Stability Analysis}
	
	Given that the local control system includes two decoupled components—an adaptive observer and an MPC-based controller—we can perform the closed-loop stability analysis in a two-step manner. This approach allows us to access the stability contributions of the observer and the controller separately. In this section, we first examine the closed-loop performance of the adaptive observer network to prove that global estimation errors converge asymptotically. Subsequently, we evaluate the MPC-based controller, verifying its ability to maintain system stability and formation tracking control performance based on the observer's estimations. 
	
	\subsection{Convergence of Estimation}
	First of all, let us focus on the estimation performance of the observers. We start by denoting collective vectors of $\hat{\xi}_i$, $\hat{S}_i$ and $\hat{\varDelta}_i$ for $i=1,2,\cdots, M$ as $\tilde{\xi}=\left[\hat{\xi}^\top_1\ \hat{\xi}^\top_2\ \cdots\ \hat{\xi}^\top_M\right]^\top$, $\hat{S}={\rm diag}(\hat{S}_1, \hat{S}_2, \cdots, \hat{S}_M)$ and $\hat{\varDelta}=\left[\hat{\varDelta}^\top_1\ \hat{\varDelta}^\top_2\ \cdots\ \hat{\varDelta}^\top_M\right]^\top$. Then, we can define the following collective estimation errors: 
	\begin{align}
		\tilde{\xi}=&\begin{bmatrix}
			\tilde{\xi}_1^\top\ \tilde{\xi}_2^\top\ \cdots\ \tilde{\xi}_M^\top
		\end{bmatrix}^\top=\hat{\xi}-\xi\\
		\tilde{S}=&{\rm diag}(\tilde{S}_1, \tilde{S}_2, \cdots, \tilde{S}_M)=\hat{S}-\left(I_M\otimes S_0\right)\\
		\tilde{\varDelta}=&\begin{bmatrix}
			\tilde{\varDelta}_1^\top\ \tilde{\varDelta}_2^\top\ \cdots\ \tilde{\varDelta}_M^\top
		\end{bmatrix}^\top=\hat{\varDelta}-\varDelta
	\end{align}
	We can also define a vector form of the leader matrix estimation error as
	\begin{align}
		\tilde{\vec{S}}\!=\!\left[\tilde{\vec{S}}_1^\top\ \tilde{\vec{S}}_2^\top\ \cdots\  \tilde{\vec{S}}_M^\top\right]^{\!\top}\!\!\!\!=&\!\left[{\rm vec}(\hat{S}_1)^{\!\top}\ {\rm vec}(\hat{S}_2)^{\!\top}\cdots {\rm vec}(\hat{S}_M)^{\!\top}\!\right]^{\!\top}\!\!\!\nonumber\\
		&-\left(1_M\otimes {\rm vec}(S_0)\right)
	\end{align}
	By recalling the adaptive distributed observer design in (\ref{6eq: distributed observer 1}), (\ref{6eq: distributed observer 2}), and (\ref{6eq: distributed observer 3}),  we can have the dynamics of these global estimation errors as follows
	\begin{align}\label{6eq: estimation error dynamics}
		\dot{\tilde{\xi}}&=\left(I_M\otimes S_0+\tilde{S}\right)\!\tilde{\xi}-\left(C_x\otimes I_{rn}\right)\!\epsilon_\xi+\tilde{S}\xi\\
		\dot{\tilde{S}}&=-\left(\left(C_S+\dot{C}_S\right)\otimes I_{rn}\right)\epsilon_S\\
		\dot{\tilde{\vec{S}}}&=-\left(\left(C_S+\dot{C}_S\right)\otimes I_{(rn)^2}\right)\vec{\epsilon}_S\\
		\dot{\tilde{\varDelta}}&=-\left(\left(C_{\!\varDelta}+\dot{C}_{\!\varDelta}\right)\otimes I_{rn}\right)\epsilon_{\!\varDelta}
	\end{align}
	where $C_x={\rm diag}(c_{\xi_1},c_{\xi_2},\cdots,c_{\xi_M})$, $C_S={\rm diag}(c_{S_1},c_{S_2},\cdots,c_{S_M})$, $C_{\!\varDelta}={\rm diag}(c_{\varDelta_1},c_{\varDelta_2},$ $\cdots,c_{\varDelta_M})$, $\epsilon_\xi=\left[\epsilon^\top_{\xi_1}\ \epsilon^\top_{\xi_2}\ \cdots\ \epsilon^\top_{\xi_M}\right]^\top$, ${\epsilon}_S={\rm diag}({\epsilon}_{S_1}, {\epsilon}_{S_2}, \cdots, {\epsilon}_{S_M})$, $\vec{\epsilon}_S=\left[\vec{\epsilon}^\top_{S_1}\ \vec{\epsilon}^\top_{S_2}\ \cdots\ \vec{\epsilon}^\top_{S_M}\right]^\top$, $\epsilon_{\!\varDelta}=\left[\epsilon^\top_{\varDelta_1}\ \epsilon^\top_{\varDelta_2}\ \cdots\ \epsilon^\top_{\varDelta_M}\right]^\top$.
	
	We introduce a new notation, $\mathcal{L^{\it f}_B}(t)=\mathcal{L}^f(t)+\mathcal{B}^f(t)$, to represent the superposition of the Laplacian matrix and the pinning matrix. This notation allows us to elucidate the relationships between global estimation errors and collective local estimation errors in the presence of communication link faults, as demonstrated below:
	\begin{align}\label{6eq: relationship 1}
		\epsilon_\xi=&\left(\mathcal{L^{\it f}_B}(t)\otimes I_{rn}\right)\tilde{\xi}\\\label{6eq: relationship 2}
		{\epsilon}_S=&\left(\mathcal{L^{\it f}_B}(t)\otimes I_{rn}\right){\tilde{S}}\\\label{6eq: relationship 3}
		\vec{\epsilon}_S=&\left(\mathcal{L^{\it f}_B}(t)\otimes I_{(rn)^2}\right){\tilde{\vec{S}}}\\\label{6eq: relationship 4}
		\epsilon_{\!\varDelta}=&\left(\mathcal{L^{\it f}_B}(t)\otimes I_{rn}\right)\tilde{\varDelta}
	\end{align}
	Subsequently, we can derive the dynamics of the local estimation errors, which are outlined below
	\begin{align}\label{6eq: epxi}
		\dot{\epsilon}_\xi&=\left(\mathcal{L^{\it f}_B}(t)\otimes I_{rn}\right)\dot{\tilde{\xi}}+\left(\dot{\mathcal{L}}_\mathcal{B}(t)\otimes I_{rn}\right)\tilde{\xi}\nonumber\\	&=\left(I_M\otimes S_0+\tilde{S}-\mathcal{L}_{\mathcal{B}}(t)C_\xi\otimes I_{rn}\right)\epsilon_\xi+\epsilon_S\xi\nonumber\\
		&\hspace{5mm}+\left(\dot{\mathcal{L}}_\mathcal{B}(t)\otimes I_{rn}\right)\tilde{\xi}\\
		\dot{\epsilon}_S&=\left(\mathcal{L}_{\mathcal{B}}(t)\otimes I_{rn}\right)\dot{\tilde{S}}+\left(\dot{\mathcal{L}_{\mathcal{B}}}(t)\otimes I_{rn}\right)\tilde{S}\nonumber\\\label{6eq: epS}
		&=-\left(\mathcal{L^{\it f}_B}(t)\left(C_S\!+\!\dot{C}_S\right)\!\otimes \!I_{rn}\right){\epsilon}_S\!+\!\left(\dot{\mathcal{L}}_\mathcal{B}(t)\!\otimes\! I_{rn}\right)\!\tilde{S}\!\!\!\\
		\dot{\vec{\epsilon}}_S&\!=\!-\!\left(\!\mathcal{L^{\it f}_B}(t)\!\left(\!C_S\!+\!\dot{C}_S\!\right)\!\otimes \!I_{(rn)^2}\right)\!\vec{\epsilon}_S\!+\!\left(\!\!\dot{\mathcal{L}}_\mathcal{B}(t)\!\otimes \!I_{(rn)^2}\!\right)\!\!\tilde{\vec{S}}\!\!\!\\\label{6eq: epDelta}
		\dot{\epsilon}_{\!\varDelta}&=\left(\mathcal{L^{\it f}_B}(t)\otimes I_{rn}\right)\dot{\tilde{\varDelta}}+\left(\dot{\mathcal{L}}_\mathcal{B}(t)\otimes I_{rn}\right)\tilde{\varDelta}\nonumber\\
		&=\!-\!\left(\mathcal{L^{\it f}_B}(t)\left(\!C_{\!\varDelta}\!+\!\dot{C}_{\!\varDelta}\!\right)\!\otimes \!I_{rn}\!\right)\!\epsilon_{\!\varDelta}\!+\!\left(\dot{\mathcal{L}}_\mathcal{B}(t)\!\otimes\! I_{rn}\!\right)\!\tilde{\varDelta}
	\end{align}
	
	Having derived the dynamics of the error, we can then move on to formulate the first theorem regarding the convergence of distributed adaptive estimation. Before proceeding, however, it is essential to establish several foundational lemmas. These lemmas are building blocks for the proof of the main theorem. 
	
	\begin{lemt}\label{6lemt: L_B}
		Under Assumptions \ref{6asm: communication link fault} and \ref{6asm: spanning tree}, the following relationships hold 
		\begin{align}
			\left\|\tilde{\xi}\right\|&\leqslant\frac{\|\epsilon_\xi\|}{\lambda_{\min}(\mathcal{L^{\it f}_B}(t))}\\
			\left\|\tilde{\vec{S}}\right\|&\leqslant\frac{\|\vec{\epsilon}_S\|}{\lambda_{\min}(\mathcal{L^{\it f}_B}(t))}\\
			\left\|\tilde{\varDelta}\right\|&\leqslant\frac{\|\epsilon_{\!\varDelta}\|}{\lambda_{\min}(\mathcal{L^{\it f}_B}(t))}
		\end{align}
	\end{lemt}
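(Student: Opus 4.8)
The plan is to obtain all three bounds from the algebraic identities (\ref{6eq: relationship 1}), (\ref{6eq: relationship 3}) and (\ref{6eq: relationship 4}) by inverting the coupling matrix $\mathcal{L^{\it f}_B}(t)$ and estimating the norm of its inverse. The first step is to establish that $\mathcal{L^{\it f}_B}(t)$ is nonsingular for every $t\geqslant0$. By the sign-preservation part of Assumption~\ref{6asm: communication link fault}, the faulty weights $a^f_{ij}(t)$ and $b^f_i(t)$ retain the same zero/nonzero pattern and sign as $a_{ij}$ and $b_i$, so the interconnection graph is structurally unchanged and the connectivity hypothesis of Assumption~\ref{6asm: spanning tree} still applies to the faulty graph at each instant. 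Lemma~\ref{2lemt: graph 2} then guarantees that every eigenvalue of $\mathcal{L^{\it f}_B}(t)=\mathcal{L}^f(t)+\mathcal{B}^f(t)$ has positive real part, hence $\det\mathcal{L^{\it f}_B}(t)\neq0$ and $\mathcal{L^{\it f}_B}(t)$ is invertible.

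Next I would use $\left(\mathcal{L^{\it f}_B}(t)\otimes I_{rn}\right)^{-1}=\left(\mathcal{L^{\it f}_B}(t)\right)^{-1}\otimes I_{rn}$ to invert (\ref{6eq: relationship 1}) into $\tilde{\xi}=\left(\left(\mathcal{L^{\it f}_B}(t)\right)^{-1}\otimes I_{rn}\right)\epsilon_\xi$. Taking Euclidean norms, using submultiplicativity, and noting that the spectral norm is unchanged by a Kronecker product with an identity block yields $\left\|\tilde{\xi}\right\|\leqslant\left\|\left(\mathcal{L^{\it f}_B}(t)\right)^{-1}\right\|\left\|\epsilon_\xi\right\|$, where $\left\|\left(\mathcal{L^{\it f}_B}(t)\right)^{-1}\right\|$ equals the reciprocal of the smallest singular value of $\mathcal{L^{\it f}_B}(t)$ --- the quantity written $\lambda_{\min}(\mathcal{L^{\it f}_B}(t))$ in the statement --- which gives the first inequality. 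The bounds for $\tilde{\vec{S}}$ and $\tilde{\varDelta}$ follow from the identical argument applied to (\ref{6eq: relationship 3}) and (\ref{6eq: relationship 4}), with $I_{rn}$ replaced by $I_{(rn)^2}$ in the vectorized-matrix case, which does not affect any of the norm manipulations.

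The delicate point --- and essentially the only obstacle --- is that $\mathcal{L^{\it f}_B}(t)$ is the coupling matrix of a directed, weighted, time-varying graph and is in general neither symmetric nor normal, so ``$\lambda_{\min}$'' cannot literally denote a smallest eigenvalue and should be read as the smallest singular value (equivalently, for the symmetric case, the smallest eigenvalue). I would make this identification explicit so that the chain $\left\|\left(\mathcal{L^{\it f}_B}(t)\right)^{-1}\right\|=1/\lambda_{\min}(\mathcal{L^{\it f}_B}(t))$ is unambiguous; an eigenvalue-based alternative is the standard route of introducing a positive diagonal weighting $Q$ with $Q\mathcal{L^{\it f}_B}(t)+\mathcal{L^{\it f}_B}(t)^\top Q$ positive definite and estimating in the $Q$-weighted norm. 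Finally, since the bound is a priori $t$-dependent, when it is later invoked for the convergence argument one uses the boundedness of the faults in Assumption~\ref{6asm: communication link fault}, which confines $\left\{\mathcal{L^{\it f}_B}(t):t\geqslant0\right\}$ to a compact set of nonsingular matrices and hence keeps $\lambda_{\min}(\mathcal{L^{\it f}_B}(t))$ bounded away from zero uniformly in $t$.
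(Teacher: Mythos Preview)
Your proposal is correct and follows essentially the same route as the paper: invoke Lemma~\ref{2lemt: graph 2} under Assumptions~\ref{6asm: communication link fault} and~\ref{6asm: spanning tree} to get invertibility of $\mathcal{L^{\it f}_B}(t)$, invert the Kronecker-product relations (\ref{6eq: relationship 1}), (\ref{6eq: relationship 3}), (\ref{6eq: relationship 4}), and take norms. Your explicit remark that $\lambda_{\min}(\mathcal{L^{\it f}_B}(t))$ must be read as the smallest singular value (since the matrix is generally non-symmetric) is in fact more careful than the paper's own proof, which simply asserts ``positive-definite'' and proceeds.
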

	
	\begin{proof}
		Referring back to Lemma \ref{2lemt: graph 2}, it can be obtained that the matrix $\mathcal{L^{\it f}_B}(t)$ is nonsingular and positive-definite. This guarantees that $\left(\mathcal{L^{\it f}_B}(t)\right)^{-1}$ exists, and is nonnegative. Then, it follows from (\ref{6eq: relationship 1}) - (\ref{6eq: relationship 4}) that
		\begin{align}
			\tilde{\xi}&=\left(\mathcal{L}^{f^{-1}}_\mathcal{B}(t)\otimes I_{rn}\right)\epsilon_\xi\\ \tilde{\vec{S}}&=\left(\mathcal{L}^{f^{-1}}_\mathcal{B}(t)\otimes I_{(rn)^2}\right)\vec{\epsilon}_S\\
			\tilde{\varDelta}&=\left(\mathcal{L}^{f^{-1}}_\mathcal{B}(t)\otimes I_{rn}\right)\epsilon_{\!\varDelta}
		\end{align}
		we can then have that
		\begin{align}
			\left\|\tilde{\xi}\right\|&\leqslant\left\|\left(\mathcal{L}^{f^{-1}}_\mathcal{B}(t)\otimes I_{rn}\right)\epsilon_\xi\right\|\leqslant \frac{\|\epsilon_\xi\|}{\lambda_{\min}(\mathcal{L^{\it f}_B}(t))}\\ 
			\left\|\tilde{\vec{S}}\right\|&\leqslant \left\|\left(\mathcal{L}^{f^{-1}}_\mathcal{B}(t)\otimes I_{(rn)^2}\right)\vec{\epsilon}_S\right\|\leqslant\frac{\|\vec{\epsilon}_S\|}{\lambda_{\min}(\mathcal{L^{\it f}_B}(t))}\\
			\left\|\tilde{\varDelta}\right\|&\leqslant \left\|\left(\mathcal{L}^{f^{-1}}_\mathcal{B}(t)\otimes I_{rn}\right)\epsilon_{\!\varDelta}\right\|\leqslant \frac{\|\epsilon_{\!\varDelta}\|}{\lambda_{\min}(\mathcal{L^{\it f}_B}(t))}
		\end{align}
		which prove Lemma \ref{6lemt: L_B}. 
	\end{proof}
	
	\begin{lemt}\label{6lemt: Q and P}
		Define a diagonal matrix $P(t)=\text{diag}\left(p_1(t),p_2(t),\cdots,p_M(t)\right)$ with $\left[p_1(t)\ p_2(t)\ \cdots\ p_M(t)\right]^\top=\left(\mathcal{L^{\it f}_B}(t)\right)^{-1}{1}_M$. If  Assumptions \ref{6asm: communication link fault} and \ref{6asm: spanning tree} hold, we can have that $P(t)$ is positive-definite. Furthermore, the symmetric matrix $Q(t)$ defined by
		\begin{align}
			Q(t)=P(t)\mathcal{L^{\it f}_B}(t)+\mathcal{L}_\mathcal{B}^{f\top}(t)P(t)
		\end{align}
		is positive-definite as well. Additionally, both $Q(t)$ and its time derivative are bounded. 
	\end{lemt}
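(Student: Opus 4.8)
The plan is to fix an arbitrary instant $t$ and treat $\mathcal{L^{\it f}_B}(t)$ as a frozen-time matrix, exploiting that it is a nonsingular M-matrix. First I would note that the sign-preservation part of Assumption~\ref{6asm: communication link fault} keeps the zero/nonzero and sign pattern of $\mathcal{L^{\it f}_B}(t)$ identical to that of $\mathcal{L}_\mathcal{B}=\mathcal{L}+\mathcal{B}$, so $\mathcal{L^{\it f}_B}(t)$ remains a $Z$-matrix (all off-diagonal entries $[\mathcal{L^{\it f}_B}(t)]_{ij}=-a^f_{ij}(t)\leqslant 0$) whose associated digraph still satisfies Assumption~\ref{6asm: spanning tree}. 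Lemma~\ref{2lemt: graph 2} then gives that all eigenvalues of $\mathcal{L^{\it f}_B}(t)$ have positive real parts, and a $Z$-matrix with this property is a nonsingular M-matrix; hence $\bigl(\mathcal{L^{\it f}_B}(t)\bigr)^{-1}$ exists, is nonsingular, and is entrywise nonnegative. This fact is the workhorse for all three claims.

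For the positive-definiteness of $P(t)$, I would observe that each $p_i(t)$ is the $i$th entry of $\bigl(\mathcal{L^{\it f}_B}(t)\bigr)^{-1}1_M$, i.e.\ the sum of the $i$th row of the nonnegative, nonsingular matrix $\bigl(\mathcal{L^{\it f}_B}(t)\bigr)^{-1}$; since no row of a nonsingular matrix can be identically zero, $p_i(t)>0$ for all $i$, so $P(t)=\text{diag}(p_1(t),\dots,p_M(t))$ is positive-definite.

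The main step is the positive-definiteness of $Q(t)$. It is symmetric by construction, and it is again a $Z$-matrix, since its off-diagonal entry is $[Q(t)]_{ij}=p_i(t)[\mathcal{L^{\it f}_B}(t)]_{ij}+p_j(t)[\mathcal{L^{\it f}_B}(t)]_{ji}=-p_i(t)a^f_{ij}(t)-p_j(t)a^f_{ji}(t)\leqslant 0$. For a symmetric $Z$-matrix, positive-definiteness is equivalent to the existence of a strictly positive vector mapped into the strictly positive orthant, and the natural candidate is $v=1_M$: using $\mathcal{L}^f(t)1_M=0$ so that $\mathcal{L^{\it f}_B}(t)1_M=\mathcal{B}^f(t)1_M\geqslant 0$, together with the defining relation for $p(t)$ (which gives $\mathcal{L}_\mathcal{B}^{f\top}(t)\,p(t)=1_M$), one obtains $Q(t)1_M=P(t)\mathcal{B}^f(t)1_M+1_M\geqslant 1_M>0$, which forces $Q(t)$ to be positive-definite. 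Making this M-matrix / diagonal-dominance argument airtight --- in particular pinning down the exact vector $p(t)$ for which $\mathcal{L}_\mathcal{B}^{f\top}(t)p(t)=1_M$ --- is the part I expect to demand the most care.

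Finally, for boundedness, Assumption~\ref{6asm: communication link fault} confines $\{a^f_{ij}(t),b^f_i(t)\}$ to a fixed compact set that preserves the graph structure, so $\mathcal{L^{\it f}_B}(t)$ and $\dot{\mathcal{L}}^f_\mathcal{B}(t)$ are bounded and $\det\mathcal{L^{\it f}_B}(t)$ is bounded away from zero; hence $\bigl(\mathcal{L^{\it f}_B}(t)\bigr)^{-1}$ is bounded, and so is $P(t)$. Differentiating the matrix inverse gives $\dot P(t)=-\text{diag}\!\bigl((\mathcal{L^{\it f}_B})^{-1}\dot{\mathcal{L}}^f_\mathcal{B}(\mathcal{L^{\it f}_B})^{-1}1_M\bigr)$, which is likewise bounded. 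Since $Q(t)=P(t)\mathcal{L^{\it f}_B}(t)+\mathcal{L}_\mathcal{B}^{f\top}(t)P(t)$ and $\dot Q(t)=\dot P\mathcal{L^{\it f}_B}+P\dot{\mathcal{L}}^f_\mathcal{B}+(\dot{\mathcal{L}}^f_\mathcal{B})^{\top}P+\mathcal{L}_\mathcal{B}^{f\top}\dot P$ are finite sums of products of bounded matrices, both are bounded, and the same compactness yields a uniform positive lower bound on $\lambda_{\min}(Q(t))$. The recurring subtlety throughout is upgrading these frozen-time conclusions to uniform-in-$t$ ones, which is precisely what the bounded-fault hypothesis of Assumption~\ref{6asm: communication link fault} is there to guarantee.
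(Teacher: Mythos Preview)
The paper does not supply its own proof of this lemma; it simply defers to Lemmas~1 and~2 of \cite{chen2020adaptive}. Your M-matrix strategy is the standard route to such results, so in spirit there is nothing to contrast.

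There is, however, a genuine gap exactly where you flagged it. From the paper's definition $p(t)=\bigl(\mathcal{L^{\it f}_B}(t)\bigr)^{-1}1_M$ one gets $\mathcal{L^{\it f}_B}(t)\,p(t)=1_M$, \emph{not} $\mathcal{L}_\mathcal{B}^{f\top}(t)\,p(t)=1_M$; these differ because the graph is directed. Worse, with the paper's literal definition the conclusion can fail. Take $M=3$ with $b_1=1$, $a_{21}=a_{31}=1$ and all other weights zero (the leader pins node~$1$, which in turn feeds nodes~$2$ and~$3$, so Assumption~\ref{6asm: spanning tree} holds). Then
\[
\mathcal{L}_\mathcal{B}=\begin{pmatrix}1&0&0\\-1&1&0\\-1&0&1\end{pmatrix},\quad p=\mathcal{L}_\mathcal{B}^{-1}1_3=\begin{pmatrix}1\\2\\2\end{pmatrix},\quad Q=P\mathcal{L}_\mathcal{B}+\mathcal{L}_\mathcal{B}^\top P=\begin{pmatrix}2&-2&-2\\-2&4&0\\-2&0&4\end{pmatrix},
\]
and $Q(2,1,1)^\top=0$, so $Q$ is only positive \emph{semi}definite. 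Thus your computation $Q1_M\geqslant 1_M$ cannot be rescued under the stated definition; the obstruction is not a technicality but an actual counterexample. The construction that makes your diagonal-dominance argument work is $p(t)=\bigl(\mathcal{L}_\mathcal{B}^{f\top}(t)\bigr)^{-1}1_M$, which is the version used in the standard references. With that choice one genuinely has $\mathcal{L}_\mathcal{B}^{f\top}(t)p(t)=1_M$, hence $Q(t)1_M=P(t)\mathcal{B}^f(t)1_M+1_M\geqslant 1_M>0$, and positive-definiteness of the symmetric $Z$-matrix $Q(t)$ follows. The remaining parts of your outline --- positivity of each $p_i(t)$ from entrywise nonnegativity of the inverse M-matrix, and uniform boundedness of $P,\dot P,Q,\dot Q$ via the compactness afforded by Assumption~\ref{6asm: communication link fault} --- are sound and carry over verbatim to the corrected definition.
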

	\begin{proof}
		The proof of Lemma \ref{6lemt: Q and P} can be found under Lemmas 1 and 2 in \cite{chen2020adaptive},  and is therefore omitted here for brevity.
	\end{proof}
	
	Now, we present our first main result of the closed-loop analysis by the following theorem. 
	\begin{thmt}\label{6thmt: 1}
		Suppose that Assumptions \ref{6asm: communication link fault} and \ref{6asm: spanning tree} hold. Consider the $M$-agent system with the virtual leader (\ref{6eq: leader dynamics}), interconnected via the weighted directed graph $\mathcal{G}$. Implement the leader dynamics (\ref{6eq: leader dynamics}), the distributed leader state observer (\ref{6eq: distributed observer 1}), the leader dynamics observer (\ref{6eq: distributed observer 2}) and the formation displacement observer (\ref{6eq: distributed observer 3}) for $i=1,2,\cdots, M$. If the leader state observer gain $c_{\xi_i}$ for $i=1,2,\cdots, M$ are selected such that the following condition is satisfied
		\begin{align}
			\lambda_{\min}(C_\xi)>1+\frac{\kappa^*}{\kappa_0}
		\end{align}
		where $\kappa^*=\frac{5\kappa_2}{4\kappa_0}+\frac{5\kappa_3}{\kappa_0}+\frac{5\kappa_4}{\kappa_0}+\frac{5\kappa_5}{\kappa_0}$ with $\kappa_0=\min_{\forall t\geqslant 0}\kappa_{\min}(Q(t))$, $\kappa_1\!=\!\max_{\forall t\geqslant0}\lambda_{\max}\!\left(P^2(t)\right)$, $\kappa_2\!=\!\max_{\forall t\geqslant0}\lambda_{\max}\!\left(\!\dot{P}^2(t)\right)$,  $\kappa_3=\max_{\forall t\geqslant0}\lambda_{\max}\left(P^2(t)\mathcal{L}^{f^{-\top}}_\mathcal{B}\!\!(t)\dot{\mathcal{L}}_\mathcal{B}^{f^\top}\!\!(t)\dot{\mathcal{L}}^f_\mathcal{B}(t)\mathcal{L^{\it f}_B}(t)\right)$, $\kappa_4=\max_{\forall t\geqslant0}\lambda_{\max}\left(P^2(t)\right.$ $\left.\otimes S_0^\top S_0\right)$, and $\kappa_5=\max_{\forall t\geqslant0}\lambda_{\max}\left(\tilde{S}^\top\left(P^2(t)\otimes I_{rn}\right)\tilde{S}\right)$, then all signals within the observer network are globally bounded. Moreover, all the estimated errors, $\tilde{\xi}$, $\tilde{S}$, $\tilde{\vec{S}}$ and $\tilde{\varDelta}$, asymptotically converge to the origin. 
	\end{thmt}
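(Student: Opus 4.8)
The plan is to prove Theorem~\ref{6thmt: 1} by a Lyapunov analysis that exploits the (upper-triangular) cascade structure of the error dynamics~(\ref{6eq: estimation error dynamics}): the equations governing $\tilde{\vec{S}}$ (equivalently $\tilde{S}$) and $\tilde{\varDelta}$ are self-contained, each forced only by its own neighborhood error through the fault-perturbed gain $\mathcal{L^{\it f}_B}(t)$, whereas $\dot{\tilde{\xi}}$ is additionally forced by $\tilde{S}$ and by the scaled leader state $\xi$. I would therefore treat the blocks in the order $\tilde{\vec{S}}\to\tilde{\varDelta}\to\tilde{\xi}$ and propagate the conclusions forward; observe that the very definition of $\kappa_5$ presumes $\tilde{S}$ bounded, so the $\tilde{S}$-step genuinely has to precede the $\tilde{\xi}$-step. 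Throughout, Lemmas~\ref{2lemt: graph 2} and~\ref{6lemt: Q and P} together with Assumptions~\ref{6asm: communication link fault}--\ref{6asm: spanning tree} supply the structural facts required: $\mathcal{L^{\it f}_B}(t)$ is nonsingular with $\lambda_{\min}(\mathcal{L^{\it f}_B}(t))$ uniformly positive, and $P(t),Q(t)$ are positive definite with $Q(t),\dot{Q}(t),\dot{P}(t)$ bounded, so that the constants $\kappa_0,\dots,\kappa_4$ are well defined.

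For the $\tilde{\vec{S}}$-block I would take the time-varying candidate $V_S=\tfrac12\,\tilde{\vec{S}}^{\top}\big(P(t)\otimes I_{(rn)^2}\big)\tilde{\vec{S}}$, which is equivalent to $\|\tilde{\vec{S}}\|^2$ up to uniform constants by Lemma~\ref{6lemt: Q and P}, differentiate along the $\dot{\tilde{\vec{S}}}$ equation in~(\ref{6eq: estimation error dynamics}) and substitute $\vec{\epsilon}_S=(\mathcal{L^{\it f}_B}(t)\otimes I_{(rn)^2})\tilde{\vec{S}}$. The leading term is $-\tilde{\vec{S}}^{\top}\big([P(C_S+\dot{C}_S)\mathcal{L^{\it f}_B}+\mathcal{L}_{\mathcal{B}}^{f\top}(C_S+\dot{C}_S)P]\otimes I\big)\tilde{\vec{S}}$; since $C_S+\dot{C}_S\succeq I_M$ (as $c_{S_i}(0)\ge1$ and $\dot{c}_{S_i}=\|\vec{\epsilon}_{S_i}\|^2\ge0$) and $P\mathcal{L^{\it f}_B}+\mathcal{L}_{\mathcal{B}}^{f\top}P=Q(t)$, this is bounded above by $-\kappa_0\|\vec{\epsilon}_S\|^2$ plus the bounded, indefinite $\dot{P}$-contribution. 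The decisive point---exactly as in the adaptive distributed observer analyses of~\cite{chen2020adaptive,yang2021adaptive}---is that $c_{S_i}$ is nondecreasing, so a boundedness/contradiction argument (in the large-gain regime $\dot{V}_S<0$, which is incompatible with unbounded monotone growth of $c_{S_i}$) forces every $c_{S_i}$ to converge to a finite limit, hence $\int_0^\infty\|\vec{\epsilon}_S\|^2\,\mathrm{d}t<\infty$. Boundedness of all signals then makes $\|\vec{\epsilon}_S\|^2$ uniformly continuous, and Barbalat's lemma yields $\vec{\epsilon}_S\to0$, so $\tilde{\vec{S}}\to0$ (equivalently $\tilde{S}\to0$) by Lemma~\ref{6lemt: L_B}. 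The $\tilde{\varDelta}$-block is handled verbatim with $(c_{\varDelta_i},\epsilon_{\varDelta_i})$ in place of $(c_{S_i},\vec{\epsilon}_{S_i})$.

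For the $\tilde{\xi}$-block I would use $V_\xi=\epsilon_\xi^{\top}(P(t)\otimes I_{rn})\epsilon_\xi$ and differentiate along~(\ref{6eq: epxi}). The $-\mathcal{L^{\it f}_B}C_\xi$ term supplies the dominant negative contribution, of order $-\lambda_{\min}(C_\xi)\kappa_0\|\epsilon_\xi\|^2$ up to a $2\|\epsilon_\xi\|^2$ remainder, while the five indefinite terms---arising from $I_M\otimes S_0$, from $\tilde{S}$, from $(\dot{\mathcal{L}}_{\mathcal{B}}^{f}(t)\otimes I_{rn})\tilde{\xi}$, from the forcing $\epsilon_S\xi$, and from $\dot{P}(t)$---are split by Young's inequality, each returning a coefficient of the form $5\kappa_j/\kappa_0$ after using $\tilde{\xi}=\big(\big(\mathcal{L^{\it f}_B}(t)\big)^{-1}\otimes I_{rn}\big)\epsilon_\xi$ and Lemma~\ref{6lemt: Q and P}; this is the origin of $\kappa^{*}$. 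Under the stated condition $\lambda_{\min}(C_\xi)>1+\kappa^{*}/\kappa_0$ the net quadratic coefficient is strictly negative, which yields an input-to-state-type estimate proving global boundedness of $\epsilon_\xi$ (hence of $\tilde{\xi}$ by Lemma~\ref{6lemt: L_B}) and of the whole observer network; since the forcing signals $\tilde{S}$ and $\epsilon_S$ vanish by the previous step, a final Barbalat argument upgrades boundedness to $\epsilon_\xi\to0$, hence $\tilde{\xi}\to0$.

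I expect the chief obstacle to be the forcing term $\epsilon_S\xi$ (equivalently $\tilde{S}\xi$) in the $\tilde{\xi}$-dynamics: $\epsilon_S$ is only known to vanish asymptotically, with no a priori rate, so dominating $\epsilon_S\xi$ requires the scaled leader trajectory $\xi$ to remain bounded---the natural, implicit requirement that $S_0$ generate a bounded reference---and this is precisely what $\kappa_4$ and $\kappa_5$ quantify in the gain condition. A secondary technical nuisance is that the diagonal gain matrices $C_\xi,C_S,C_{\varDelta}$ do not commute with $Q(t)$, so the leading terms must be lower-bounded by exploiting $C_S+\dot{C}_S\succeq I_M$ and choosing $\lambda_{\min}(C_\xi)$ large rather than by a direct congruence; and one must carefully verify the uniform bounds on $P(t),Q(t),\dot{P}(t)$ and the uniform positivity of $\lambda_{\min}(\mathcal{L^{\it f}_B}(t))$ under Assumptions~\ref{6asm: communication link fault}--\ref{6asm: spanning tree} so that the constants and the Barbalat arguments are legitimate.
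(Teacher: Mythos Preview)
Your overall cascade strategy and the treatment of the $\tilde{\xi}$-block (the candidate $V_\xi=\epsilon_\xi^\top(P(t)\otimes I_{rn})\epsilon_\xi$, the five Young splittings producing $\kappa^*$, the ISS-type bound, and the Barbalat conclusion) coincide with the paper's proof. The gap is in the $\tilde{\vec{S}}$- and $\tilde{\varDelta}$-blocks. With your candidate $V_S=\tfrac12\,\tilde{\vec{S}}^{\top}(P(t)\otimes I)\tilde{\vec{S}}$ the leading derivative term is $-\tfrac12\,\tilde{\vec{S}}^{\top}\big[(C_S{+}\dot C_S)P\mathcal{L}^f_{\mathcal{B}}+\mathcal{L}^{f\top}_{\mathcal{B}}P(C_S{+}\dot C_S)\big]\tilde{\vec{S}}$, and this is \emph{not} bounded above by $-\kappa_0\|\vec\epsilon_S\|^2$ merely from $C_S{+}\dot C_S\succeq I$ and $P\mathcal{L}^f_{\mathcal{B}}+\mathcal{L}^{f\top}_{\mathcal{B}}P=Q$: the diagonal gain appears on only one side of $P\mathcal{L}^f_{\mathcal{B}}$, so no congruence with $Q$ emerges, and for a directed $\mathcal{L}^f_{\mathcal{B}}$ the symmetrized matrix need not stay positive definite when the diagonal entries of $C_S{+}\dot C_S$ are disparate. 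Consequently your ``large-gain contradiction'' sketch does not close, because large adaptive gains do not force $\dot V_S<0$ under your $V_S$; the non-commutativity you call a nuisance is in fact decisive here, not a side issue.

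The paper (following \cite{chen2020adaptive,yang2021adaptive}) resolves this by working with the \emph{neighborhood} error $\epsilon_{\!\varDelta}$ (resp.\ $\vec\epsilon_S$), weighting by $(2C_{\!\varDelta}{+}\dot C_{\!\varDelta})P(t)$, and adjoining the adaptive compensation $\mathrm{tr}\big((C_{\!\varDelta}{-}\alpha_{\!\varDelta}I_M)^2\big)$:
\[
V_{\!\varDelta}=\epsilon_{\!\varDelta}^{\top}\big((2C_{\!\varDelta}{+}\dot C_{\!\varDelta})P(t)\otimes I_{rn}\big)\epsilon_{\!\varDelta}+\mathrm{tr}\big((C_{\!\varDelta}{-}\alpha_{\!\varDelta}I_M)^2\big).
\]
Because $\dot\epsilon_{\!\varDelta}$ contains $-\big(\mathcal{L}^f_{\mathcal{B}}(C_{\!\varDelta}{+}\dot C_{\!\varDelta})\otimes I_{rn}\big)\epsilon_{\!\varDelta}$, differentiation produces the congruence $-(C_{\!\varDelta}{+}\dot C_{\!\varDelta})\big[P\mathcal{L}^f_{\mathcal{B}}+\mathcal{L}^{f\top}_{\mathcal{B}}P\big](C_{\!\varDelta}{+}\dot C_{\!\varDelta})=-(C_{\!\varDelta}{+}\dot C_{\!\varDelta})Q(C_{\!\varDelta}{+}\dot C_{\!\varDelta})\preceq-\kappa_0(C_{\!\varDelta}{+}\dot C_{\!\varDelta})^2$, which now genuinely dominates the indefinite $\dot P$-, $\dot{\mathcal{L}}^f_{\mathcal{B}}$- and gain-derivative contributions after Young's inequality; the residual positive $\epsilon_{\!\varDelta}^{\top}\epsilon_{\!\varDelta}$ is absorbed by the $-2\alpha_{\!\varDelta}\epsilon_{\!\varDelta}^{\top}\epsilon_{\!\varDelta}$ coming from the trace term (for $\alpha_{\!\varDelta}$ large enough). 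One obtains $\dot V_{\!\varDelta}\le-\kappa_0\,\epsilon_{\!\varDelta}^{\top}\big((C_{\!\varDelta}{+}\dot C_{\!\varDelta})^2\otimes I_{rn}\big)\epsilon_{\!\varDelta}$, hence boundedness of every $c_{\varDelta_i}$, finiteness of $\int_0^\infty\|\epsilon_{\!\varDelta}\|^2\,\mathrm{d}t$, and Barbalat, exactly the conclusion you want. Replace your $V_S$ and the analogous $V_{\!\varDelta}$ by these weighted-$\epsilon$ functionals and the remainder of your plan goes through unchanged.
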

	\begin{proof}
		To prove the convergence of the observer network, we divide the proof into three parts. Firstly, in {Part 1} and {Part 2}, the convergence of the displacement estimation error and the dynamics matrix estimation error are proven, respectively. Finally, in {Part 3}, we can prove that the leader state estimation error asymptotically converges to the origin. 
		
		\noindent\textbf{Part 1: } To demonstrate the convergence of the formation displacement estimation error, we begin by examining its corresponding local estimation error $\epsilon_{\varDelta}$.  A Lyapunov function candidate can be selected as follows 
		\begin{align}
			V_{\!\varDelta}=&\sum_{i=1}^m\left(2c_{\varDelta_i}+\dot{c}_{\varDelta_i}\right)p_i(t)\epsilon_{\varDelta_i}^\top\epsilon_{\varDelta_i}+\sum_{i=1}^m\left( c_{\varDelta_i}-\alpha_{\!\varDelta}\right)^2\nonumber\\
			=&\epsilon_{\!\varDelta}^\top\left(\left(2C_{\!\varDelta}+\dot{C}_{\!\varDelta}\right)P(t)\otimes I_{rn}\right)\epsilon_{\!\varDelta}+\text{tr}\left(\left( C_{\varDelta}-\alpha_{\!\varDelta} I_M\right)^2\right)
		\end{align}	
		where $\alpha_{\!\varDelta}$  is a positive constant to be determined later; $P(t)$ is defined in Lemma \ref{6lemt: Q and P}. 
		
		Taking the time derivative of $V_{\!\varDelta}$ gives
		\begin{align}\label{6eq: dotVd}
			\dot{V}_{\!\varDelta}=&4\sum_{i=1}^m\left(c_{\varDelta_i}+\dot{c}_{\varDelta_i}\right)p_i(t)\epsilon_{\varDelta_i}^\top\dot{\epsilon}_{\varDelta_i}+2\sum_i^m\dot{c}_{\varDelta_i}p_i(t)\epsilon_{\varDelta_i}^\top\epsilon_{\varDelta_i}\nonumber\\
			&+\sum_{i=1}^m\left(2c_{\varDelta_i}+\dot{c}_{\varDelta_i}\right)\dot{p}_i(t)\epsilon_{\varDelta_i}^\top\epsilon_{\varDelta_i}+2\sum_{i=1}^m\left(c_{\varDelta_i}-\alpha_{\varDelta}\right)\dot{c}_{\varDelta_i}\nonumber\\
			=&4\epsilon_{\!\varDelta}^\top\left(\left(C_{\!\varDelta}\!+\!\dot{C}_{\!\varDelta}\right)\!P(t)\!\otimes \!I_{rn}\right)\dot{\epsilon}_{\!\varDelta}\!+\!2\epsilon_{\!\varDelta}^\top\left(\dot{C}_{\!\varDelta} P(t)\!\otimes\! I_{rn}\right)\!{\epsilon}_{\!\varDelta}\!\nonumber\\
			&+\!\epsilon_{\!\varDelta}^\top\left(\left(2C_{\!\varDelta}\!+\!\dot{C}_{\!\varDelta}\right)\dot{P}(t)\!\otimes\! I_{rn}\right){\epsilon}_{\!\varDelta}+2\epsilon_{\!\varDelta}^\top\left(C_{\varDelta}\otimes I_{rn}\right)\epsilon_{\!\varDelta}\nonumber\\
			&-2\alpha_{\!\varDelta}\epsilon_{\!\varDelta}^\top\epsilon_{\!\varDelta}
		\end{align}
		
		We have $\dot{c}_{\varDelta_i}\geqslant 0$ and $c_{\varDelta_i}(t)\geqslant 1$. Substituting (\ref{6eq: epDelta}) into (\ref{6eq: dotVd}) gives
		\begin{align}
			\dot{V}_{\!\varDelta}\!\!\leqslant\!&-2\kappa_0\epsilon_{\!\varDelta}^\top\!\left(\left(C_{\!\varDelta}\!+\!\dot{C}_{\!\varDelta}\right)^{\!2}\!\otimes\! I_{rn}\right)\!{\epsilon}_{\!\varDelta}\!+\!2\epsilon_{\!\varDelta}^\top\!\!\left(\!\dot{C}_{\!\varDelta} P(t)\!\otimes\! I_{rn}\!\right)\!{\epsilon}_{\!\varDelta}\nonumber\\
			&+\epsilon_{\!\varDelta}^\top\!\left(\left(2C_{\!\varDelta}+\dot{C}_{\!\varDelta}\right)\!\dot{P}(t)\otimes I_{rn}\right)\!{\epsilon}_{\!\varDelta}\!+\!2\epsilon_{\!\varDelta}^\top\!\left(C_{\varDelta}\!\otimes\! I_{rn}\right)\epsilon_{\!\varDelta}\!\nonumber\\
			&+\!4\epsilon_{\!\varDelta}^\top\!\left(\left(C_{\!\varDelta}\!+\!\dot{C}_{\!\varDelta}\right)\!\!P(t)\dot{\mathcal{L}}^f_\mathcal{B}(t)\!\otimes\! I_{rn}\right)\!\tilde{\varDelta}\!-\!2\alpha_{\!\varDelta}\epsilon_{\!\varDelta}^\top\epsilon_{\!\varDelta}\!\!\!
		\end{align}
		where $\kappa_0=\min_{\forall t\geqslant 0}\kappa_{\min}(Q(t))$. 
		
		Applying Young's inequality, one has
		\begin{subequations}
			\begin{align}
				&\!\!\!\!2\epsilon_{\!\varDelta}^\top\!\left(\dot{C}_{\!\varDelta}{P}(t)\!\otimes\! I_{rn}\right)\!{\epsilon}_{\!\varDelta}\!\leqslant \frac{\kappa_0}{2}\epsilon_{\!\varDelta}^\top\!\left(\dot{C}_{\!\varDelta}^2\!\otimes\! I_{rn}\right)\!{\epsilon}_{\!\varDelta}\!\nonumber\\
				&\hspace{35mm}+\!\frac{2}{\kappa_0}\epsilon_{\!\varDelta}^\top\!\left({P}^2(t)\!\otimes \!I_{rn}\right)\!{\epsilon}_{\!\varDelta}\\
				&\!\!\!\!\epsilon_{\!\varDelta}^\top\!\!\left(\left(2C_{\!\varDelta}\!+\!\dot{C}_{\!\varDelta}\right)\!\dot{P}(t)\!\otimes\! I_{rn}\right)\!\!{\epsilon}_{\!\varDelta}\!\leqslant \frac{\!\kappa_0\!}{4}\epsilon_{\!\varDelta}^\top\!\!\left(\left({C}_{\!\varDelta}^2\!+\!\dot{C}_{\!\varDelta}^2\right)\!\!\otimes\! I_{rn}\right)\!{\epsilon}_{\!\varDelta}\!\nonumber\\
				&\hspace{45mm}+\!\frac{5}{\!\kappa_0\!}\epsilon_{\!\varDelta}^\top\!\!\left(\!
				\dot{P}^2(t)\!\otimes\! I_{rn}\right)\!{\epsilon}_{\!\varDelta}\!\!\!\!\!\\
				&\!\!\!\!2\epsilon_{\!\varDelta}^\top\left(C_{\varDelta}\otimes I_{rn}\right)\epsilon_{\!\varDelta}\leqslant \frac{\kappa_0}{2}\epsilon_{\!\varDelta}^\top\left({C}_{\!\varDelta}^2\otimes I_{rn}\right){\epsilon}_{\!\varDelta}+\frac{2}{\kappa_0}\epsilon_{\!\varDelta}^\top{\epsilon}_{\!\varDelta}\\
				&\!\!\!\!4\epsilon_{\!\varDelta}^\top\!\!\left(\!\left(\!C_{\!\varDelta}\!+\!\dot{C}_{\!\varDelta}\!\right)\!\!P\!(t)\dot{\mathcal{L}}^f_{\mathcal{B}}(t)\!\otimes\!I_{rn}\right)\!\tilde{\varDelta}\!\leqslant \!\!\frac{\kappa_0}{4}\epsilon_{\!\varDelta}^\top\!\!\Bigl(\!\left(\!{C}_{\!\varDelta}\!+\!\dot{C}_{\!\varDelta}\!\right)^2\!\!\!\otimes\! I_{rn}\!\Bigr)\!{\epsilon}_{\!\varDelta}\nonumber\\
				&\!\!\!\!\!+\!\frac{16}{\kappa_0}\epsilon_{\!\varDelta}^\top\!\!\left(\!P^2(t)\mathcal{L}^{f^{-\top}}_\mathcal{B}\!\!(t)\dot{\mathcal{L}}_\mathcal{B}^{f^\top}\!\!(t)\dot{\mathcal{L}}^f_\mathcal{B}(t)\mathcal{L^{\it f}_B}(t)\!\otimes\! I_{rn}\!\right)\!\epsilon_{\!\varDelta}\!\!\!
			\end{align}
		\end{subequations}
		
		We define $\kappa_1=\max_{\forall t\geqslant0}\lambda_{\max}\left(P^2(t)\right)$, $\kappa_2=\max_{\forall t\geqslant0}\lambda_{\max}\left(\dot{P}^2(t)\right)$, and $\kappa_3=\max_{\forall t\geqslant0}\lambda_{\max}\left(P^2(t)\mathcal{L}^{f^{-\top}}_\mathcal{B}\!\!(t)\dot{\mathcal{L}}_\mathcal{B}^{f^\top}\!\!(t)\dot{\mathcal{L}}^f_\mathcal{B}(t)\mathcal{L^{\it f}_B}(t)\right)$. Then, substituting the above inequalities into $\dot{V}_{\!\varDelta}$ yields
		\begin{align}
			\dot{V}_{\!\varDelta}\!\leqslant\! -\!{\lambda}_0\epsilon_{\!\varDelta}^\top\!\Bigl(\!\left(\!C_{\!\varDelta}\!+\!\dot{C}_{\!\varDelta}\!\right)^{\!2}\!\!\otimes\! I_{rn}\Bigr){\epsilon}_{\!\varDelta}+{\kappa}{\epsilon}_{\!\varDelta}^\top{\epsilon}_{\!\varDelta}-2\alpha_{\!\varDelta}{\epsilon}_{\!\varDelta}^\top{\epsilon}_{\!\varDelta}
		\end{align}
		where $\kappa=\frac{2\kappa_1}{\kappa_0}+\frac{5\kappa_2}{\kappa_0}+\frac{2}{\kappa_0}+\frac{16\kappa_3}{\kappa_0}$. There exists a bounded constant $\alpha_{\!\varDelta}$ satisfying $\alpha_{\!\varDelta}\geqslant\frac{\kappa}{2}$ such that
		\begin{align}\label{6eq: dotV}
			\dot{V}_{\!\varDelta}\leqslant -\epsilon_{\!\varDelta}^\top\!\left(\left(C_{\!\varDelta}+\dot{C}_{\!\varDelta}\right)^{\!2}\!\otimes\! I_{rn}\right)\!{\epsilon}_{\!\varDelta}
		\end{align}
		which implies that all signals in the developed observer including $\epsilon_{\!\varDelta}$, $C_{\!\varDelta}$ and $\dot{C}_{\!\varDelta}$ are globally bounded under communication link faults.
		
		To get the convergence of $\epsilon_{\!\varDelta}$, we solve (\ref{6eq: dotV}) as
		\begin{align}
			V_{\!\varDelta}(t)\!-\!V_{\!\varDelta}(0)&\!\leqslant\!-{\lambda}_0\!\int_{0}^{t}\!\epsilon_{\!\varDelta}^\top\!\left(\left(C_{\!\varDelta}\!+\!\dot{C}_{\!\varDelta}\right)^{\!2}\!\otimes\! I_{rn}\!\right)\!{\epsilon}_{\!\varDelta} \text{d}\tau
		\end{align}
		
		With the fact that $\dot{c}_{\varDelta_i}\geqslant0$ and ${c}_{\varDelta_i}\geqslant1$, one has
		\begin{align}
			0\leqslant V_{\!\varDelta}(t)&\leqslant -{\lambda}_0\int_{0}^{t}\epsilon_{\!\varDelta}^\top\left(C_{\!\varDelta}\otimes I_{rn}\right){\epsilon}_{\!\varDelta} \text{d}\tau+V_{\!\varDelta}(0)\nonumber\\
			&\leqslant-\kappa_0\int_{0}^{t}\epsilon_{\varDelta}^\top\epsilon_{\varDelta}\text{d}\tau+V_{\!\varDelta}(0)
		\end{align}
		or equivalently 
		\begin{align}
			0\leqslant\int_{0}^{t}\epsilon_{\varDelta}^\top\epsilon_{\varDelta}\text{d}\tau\leqslant \frac{V_{\!\varDelta}(0)}{\kappa_0}
		\end{align}
		This demonstrates that $\int_{0}^{t}\epsilon_{\varDelta}^\top\epsilon_{\varDelta}\text{d}\tau$ is bounded. By applying Barbalat's Lemma, we establish that $\epsilon_{\varDelta}$ converges to zero. Furthermore, according to Lemma \ref{6lemt: L_B}, it can be deduced that the estimation error $\tilde{\varDelta}$ also globally converges to zero, even in scenarios involving communication link faults. 
		
		\noindent\textbf{Part 2: } Similarly, the leader state estimation error convergence can be proven. We firstly select a candidate Lyapunov function as
		\begin{align}
			\!\!\!V_S\!=\!&\sum_{i=1}^m\left(2c_{S_i}+\dot{c}_{S_i}\right)p_i(t)\vec{\epsilon}_{S_i}^\top\vec{\epsilon}_{S_i}+\sum_{i=1}^m\left( c_{S_i}-\alpha_S\right)^2\nonumber\\
			\!=&\vec{\epsilon}_S^\top\!\!\left(\left(2C_S\!+\!\dot{C}_S\right)\!P(t)\!\otimes\! I_{(rn)^2}\!\right)\!\vec{\epsilon}_S\!+\!\text{tr}\!\left(\!\left(C_S\!-\!\alpha_S I_{(rn)^2}\right)^2\right)\!\!
		\end{align}	
		
		Following the same procedure, the derivative of $V_S$ can be bounded by 
		\begin{align}
			\dot{V}_S\leqslant -{\lambda}_0\vec{\epsilon}_S^\top\!\left(\left(C_S+\dot{C}_S\right)^{\!2}\!\otimes\! I_{(rn)^2}\right)\!\vec{\epsilon}_S
		\end{align}
		with $\alpha_S$ appropriately chosen such that $\alpha_S\geqslant\frac{\kappa}{2}$.
		
		We can further have that
		\begin{align}
			0\leqslant\int_{0}^{t}\vec{\epsilon}_{S}^\top\vec{\epsilon}_{S}\text{d}\tau\leqslant \frac{V_S(0)}{\kappa_0}
		\end{align}
		This implies that $\int_{0}^{t}\vec{\epsilon}_S^\top\vec{\epsilon}_S\text{d}\tau$ is bounded. Applying Barbalat's Lemma, we can conclude that $\vec{\epsilon}_S$ converges to zero. Furthermore, based on Lemma \ref{6lemt: L_B}, it can be concluded that the estimation errors ${\epsilon}_S$, $\tilde{\vec{S}}$, and $\tilde{S}$ also converge to zero. 
		
		\noindent\textbf{Part 3: }Finally, to demonstrate the stability and convergence of the leader state estimation, we select the following candidate for a Lyapunov function:
		\begin{align}
			V_\xi=&\sum_{i=1}^mp_i(t)\epsilon_{\xi_i}^\top\epsilon_{\xi_i}=\epsilon_\xi^\top\left(P(t)\otimes I_{rn}\right)\epsilon_\xi
		\end{align}	
		
		By taking the time derivative of $V_\xi$, we can have
		\begin{align}
			\dot{V}_\xi=&2\epsilon_\xi^\top\left(P(t)\otimes I_{rn}\right)\dot{\epsilon}_\xi+\epsilon_\xi^\top\left(\dot{P}(t)\otimes I_{rn}\right){\epsilon}_\xi
		\end{align}
		Substituting (\ref{6eq: epxi}) into it gives 
		\begin{align}
			\!\!\!\!\dot{V}_\xi\!\leqslant\!&-\kappa_0\lambda_{\min}(C_\xi)\epsilon_\xi^\top\!{\epsilon}_\xi+\epsilon_\xi^\top\!\left(\dot{P}(t)\otimes I_{rn}\right)\!{\epsilon}_\xi+2\epsilon_\xi^\top\!\left(P(t)\dot{\mathcal{L}}_\mathcal{B}(t)\right.\nonumber\\
			&\left.\otimes I_{rn}\right)\!\tilde{\xi}+2\epsilon_x^\top\!\left(P(t)\otimes S_0\right)\epsilon_\xi+2\epsilon_\xi^\top\left(P(t)\otimes I_{rn}\right)\tilde{S}{\epsilon}_\xi\nonumber\\
			&+2\epsilon_\xi^\top\left(P(t)\otimes I_{rn}\right)\epsilon_S\xi
		\end{align}
		where $\kappa_0=\min_{\forall t\geqslant 0}\lambda_{\min}(Q(t))$. 
		
		We define $\varsigma=\epsilon_S \xi$ as a new variable that converges to zero given the convergence of $\epsilon_S$. Applying Young's inequality, one has
		\begin{subequations}
			\begin{align}
				\!\!\!\!\epsilon_\xi^\top\!\left(\dot{P}(t)\!\otimes\!I_{rn}\right)\!{\epsilon}_\xi\!\leqslant\!&\frac{\kappa_0}{5}\epsilon_\xi^\top{\epsilon}_\xi\!+\!\frac{5}{4\kappa_0}\epsilon_\xi^\top\!\!\left(\!\dot{P}^2(t)\!\otimes \!I_{rn}\!\right)\!\epsilon_\xi\!\!\!\!\\
				\!\!\!\!2\epsilon_\xi^\top\!\!\!\left(\!P(t)\dot{\mathcal{L}}^f_{\mathcal{B}}(t)\!\otimes\! I_{rn}\!\right)\!\tilde{\xi}\!\leqslant& \frac{\kappa_0}{5}\epsilon_\xi^\top{\epsilon}_\xi\!+\!\frac{5}{\kappa_0}\epsilon_{\xi}^\top\!\!\left(\!P^2(t)\mathcal{L}^{f^{-\top}}_\mathcal{B}\!\!(t)\dot{\mathcal{L}}_\mathcal{B}^{f^\top}\!\!(t)\right.\nonumber\\
				&\left.\dot{\mathcal{L}}^f_\mathcal{B}(t)\mathcal{L^{\it f}_B}\!(t)\!\otimes\! I_{rn}\right)\!\epsilon_{\xi}\\
				\!\!\!\!2\epsilon_\xi^\top\!\left(P(t)\otimes S_0\right)\!\epsilon_\xi\leqslant& \!\frac{\kappa_0}{5}\epsilon_\xi^\top\!{\epsilon}_\xi\!+\!\frac{5}{\kappa_0}\epsilon_\xi^\top\!\left(\!P^2(t)\!\otimes\! S_0^\top \!S_0\!\right)\!\epsilon_\xi\!\!\!\\
				\!\!\!\!2\epsilon_\xi^\top\!\left(P(t)\otimes I_{rn}\right)\!\tilde{S}\epsilon_\xi\leqslant& \!\frac{\kappa_0}{5}\!\epsilon_\xi^\top\!{\epsilon}_\xi\!+\!\frac{5}{\kappa_0}\!\epsilon_\xi^\top\!\tilde{S}^\top\!\!\!\left(\!P^2\!(t)\!\otimes \!I_{rn}\!\right)\!\!\tilde{S}\!\epsilon_\xi\!\!\!\\
				\!\!\!\!2\epsilon_\xi^\top\!\left( P(t)\otimes I_{rn}\right)\!\varsigma\leqslant&\!\frac{\kappa_0}{5}\epsilon_\xi^\top{\epsilon}_\xi\!+\!\frac{5}{\kappa_0}\varsigma^\top\!\!\left(P^2(t)\!\otimes \!I_{rn}\right)\!\varsigma\!\!
			\end{align}
		\end{subequations}
		
		Given the boundedness of $\tilde{S}$, we can define new constants $\kappa_4=\max_{\forall t\geqslant0}\lambda_{\max}\left(P^2(t)\right.$ $\left.\otimes S_0^\top S_0\right)$ and $\kappa_5=\max_{\forall t\geqslant0}\lambda_{\max}\left(\tilde{S}^\top\left(P^2(t)\otimes I_{rn}\right)\tilde{S}\right)$, and recall the definitions of 
		$\kappa_1=\max_{\forall t\geqslant0}\lambda_{\max}\left(P^2(t)\right)$, $\kappa_2=\max_{\forall t\geqslant0}\lambda_{\max}\left(\dot{P}^2(t)\right)$,  $\kappa_3=\max_{\forall t\geqslant0}\lambda_{\max}\left(P^2(t)\mathcal{L}^{f^{-\top}}_\mathcal{B}\!\!(t)\right.$ $\left.\dot{\mathcal{L}}_\mathcal{B}^{f^\top}\!\!(t)\dot{\mathcal{L}}^f_\mathcal{B}(t)\mathcal{L^{\it f}_B}(t)\right)$.  Substituting the above inequalities into $\dot{V}_{\!\xi}$ yields
		\begin{align}\label{6eq: stability condition}
			\dot{V}_\xi\leqslant& -\left(\kappa_0\lambda_{\min}(C_\xi)-\kappa_0-\kappa^*\right)\epsilon_\xi^\top\!{\epsilon}_\xi+\frac{5\kappa_1}{\kappa_0}\varsigma^\top\varsigma
		\end{align}
		where $\kappa^*=\frac{5\kappa_2}{4\kappa_0}+\frac{5\kappa_3}{\kappa_0}+\frac{5\kappa_4}{\kappa_0}+\frac{5\kappa_5}{\kappa_0}$.
		
		Therefore, when the user-designated gains $c_{\xi_i}$ for $i=1,2,\cdots, M$ are appropriately chosen to satisfy the stability condition (\ref{6eq: stability condition}), a positive constant $\alpha_\xi$ must exist such that $0<\alpha_\xi\leqslant\kappa_0\lambda_{\min}(C_\xi)-\kappa_0-\kappa^*$. As a result, one has
		\begin{align}
			\dot{V}_\xi\leqslant& -\alpha_\xi\epsilon_\xi^\top\!{\epsilon}_\xi+\frac{5\kappa_1}{\kappa_0}\varsigma^\top\varsigma
		\end{align}
		which means that the dynamics of $\epsilon_\xi$ is robust input-to-state stable with $\varsigma$ as a disturbances input. Integrating the above inequalities over $[0,t]$ gives
		\begin{align}
			V_\xi(t)-V_\xi(0)\leqslant-\alpha_\xi\int_0^t \epsilon_\xi^\top \epsilon_\xi \text{d}\tau+\frac{5\kappa_1}{\kappa_0}\int_0^t\varsigma^\top\varsigma \text{d}\tau
		\end{align}  
		which further yields
		\begin{align}
			\alpha_\xi	\int_0^t \epsilon_\xi^\top \epsilon_\xi \text{d}\tau\leqslant\frac{5\kappa_1}{\kappa_0}\int_0^t\varsigma^\top\varsigma \text{d}\tau
		\end{align}
		From {Part 2}, we have proven the convergence of $\epsilon_S$,  which also implies the convergence of $\varsigma$ to zero. Hence, it can be concluded that   $\int_0^t\varsigma^\top\varsigma \text{d}\tau$ is bounded. From the above inequality, we can prove the boundedness of $\int_0^t\epsilon_\xi^\top\epsilon_\xi \text{d}\tau$. By using Barbalat's Lemma, $\epsilon_\xi$ can be proven to converge to zero as well. From Lemma \ref{6lemt: L_B}, it can be obtained that $\tilde{\xi}$ also asymptotically converges to the origin. 
	\end{proof}
	
	\subsection{Stability of Control}
	Following the establishment of the asymptotic convergence of the estimation errors as demonstrated in Theorem \ref{6thmt: 1}, we can now present our second main result of this paper, which summarizes the convergence of the system's actual state to the locally estimated leader state under the proposed Lyapunov-based MPC framework. 
	
	We start by defining a sliding mode tracking control error for the follower $i$ as
	\begin{align}
		\!\!s_i\!=\!\sum_{l=0}^{r-2}\!\lambda_{i,l}\!\left(x_{i,l+1}\!-\!\hat{\xi}_{i,l+1}\!-\!\hat{\Delta}_{i,l+1}\!\right)\!+\!\left(x_{i,r}\!-\!\hat{\xi}_{i,r}\!-\!\hat{\Delta}_{i,r}\!\right)\!\!\!
	\end{align}
	Then, the following Lyapunov function candidate for follower $i$ can be considered
	\begin{align}
		V_i=\frac{1}{2}\|s_i\|^2
	\end{align}
	Differentiating $V$ along the closed-loop dynamics yields 
	\begin{align}\label{6eq: dotV1}
		\dot{V}_i=s_i^\top \biggl(&\sum_{l=0}^{r-2}\lambda_{l}\left( x_{i,l+2}-\hat{\xi}_{i,l+2}-\hat{\Delta}_{i,l+2}\right)
		+f_i({x}_i)\nonumber\\
		&+G_i({x}_i)u_i-\dot{\hat{\xi}}_{i,r}\biggr)
	\end{align}
	Recalling the inequality (\ref{6eq: inequality}), adding and subtracting the right-hand side of it to and from (\ref{6eq: dotV1}), we can rewrite $\dot{V}_i(t)$ over the interval $t\in[t_k,t_{k+1})$ as follows:
	\begin{align}\label{6eq: dotV2}
		\!\!\dot{V}_i(t)\!\leqslant&\!-\!c_i\|s_i(t)\|^2 \!+\!c_i\left(\|s_i(t)\|^2\!-\!\|s_i(t_k)\|^2\right)\!
		+\!s_i^\top\!(t) \biggl(\sum_{l=0}^{r-2}\lambda_{l}\nonumber\\
		&\!\!\!\left( \!x_{i,l+2}(t)\!-\!\hat{\xi}_{i,l+2}(t)\!-\!\hat{\Delta}_{i,l+2}(t)\right)\!+\!f_i({x}_i(t))\!+\!G_i({x}_i(t))\nonumber\\
		&\!\!u_i(t)\!-\!\dot{\hat{\xi}}_{i,r}(t)\biggr)\!-\!s_i^\top(t_k)\!\biggl(\sum_{l=0}^{r-2}\lambda_{l}\left( \!x_{i,l+2}(t_k)\!-\!\hat{\xi}_{i,l+2}(t_k)\!\right.\nonumber\\
		&\left.\!\!\!\!\!-\!\hat{\Delta}_{i,l\!+\!2}(t_k)\!\right)\!\!+\!\!f_i({x}_i(t_k))\!+\!G_{\!i}\!({x}(t_k))u_i(t)\!-\!\dot{\hat{\xi}}_{i,r}(t_k)\!\biggr)\!\!\!\!\!\!\!\!
	\end{align}
	
	By invoking Lipschitz continuity and under Assumption \ref{6asm: Lipschitiz continuity}, there must exist positive Lipschitz constants $L_{s_i^2}$, $L_{x_i}$ and $L_{\xi_i}$ such that 
	\begin{subequations}\label{6eq: LM}
		\begin{align}
			&\left|\|s_i(t)\|^2\!-\!\|s_i(t_k)\|^2\right|\leqslant L_{s_i^2}\left\|x_i(t)-x_i(t_k)\right\|\\
			&\biggl|s_i^\top\!\!(t)\! \biggl(\!\sum_{l\!=\!0}^{r\!-\!2}\!\lambda_{l}\!\left(\! x_{i,l\!+\!2}(t)\!-\!\hat{\xi}_{i,l\!+\!2}(t)\!-\!\hat{\Delta}_{i,l\!+\!2}(t)\!\right)\!+\!f_i({x}_i(t))\!+\!G_i({x}_i(t))\nonumber\\
			&u_i(t)\!-\!\dot{\hat{\xi}}_{i,r}(t)\!\biggr) \!\!-\!\!s_i^{\!\top\!}(t_k) \!\biggl(\sum_{l=0}^{r-2}\lambda_{l}\!\left( x_{i,l+2}(t_k)\!-\!\hat{\xi}_{i,l+2}(t_k)\!-\!\hat{\Delta}_{i,l+2}(t_k)\!\right)\!
			\nonumber\\
			&+\!\!f_i({x}_i(t_k))\!+\!G_i({x}(t_k))u_i(t)\!-\!\dot{\hat{\xi}}_{i,r}(t_k)\!\biggr)\!\biggr|\nonumber\\
			&\leqslant L_{x_i}\left\|x_i(t)-x_i(t_k)\right\|+L_{\xi_i}\left\|\hat{\xi}_i(t)-\hat{\xi}_i(t_k)\right\|
		\end{align}
		We also have positive constants $M_{x_i}$ and $M_{\xi_i}$ satisfying
		\begin{align}
			\left\|x_i(t)-x_i(t_k)\right\|&\leqslant M_{x_i}\delta\\
			\left\|\hat{\xi}_i(t)-\hat{\xi}_i(t_k)\right\|&\leqslant M_{\xi_i}\delta
		\end{align}
	\end{subequations}
	By substituting (\ref{6eq: LM}) into (\ref{6eq: dotV2}), we have that
	\begin{align}\label{6eq: dotV3}
		\dot{V}_i(t)\leqslant&-c_i\|s_i(t)\|^2+(\kappa_{i_1}+c_i\kappa_{i_2})\delta
	\end{align}
	where $\kappa_{i_1}=L_{x_i}M_{x_i}+L_{\xi_i}M_{\xi_i}$ and $\kappa_{i_2}=L_{s_i^2}M_{x_i}$.

	We define a collective sliding mode tracking control error as $s\!=\!\left[s_1^\top\ s_2^\top\ \cdots\ s_M^\top\right]^{\!\top}\!\!$, and Theorem \ref{6thmt: 2} below encapsulates the second main result of this paper 
	\begin{thmt}\label{6thmt: 2}
		Suppose Assumptions \ref{6asm: Lipschitiz continuity}-\ref{6asm: spanning tree} hold. Consider the MASs with $M$ followers  (\ref{6eq: nonlinear systems}) and a virtual leader \ref{6eq: leader dynamics} in closed loop under the developed adaptive distributed observer-based Lyapunov-based MPC framework with the leader observer (\ref{6eq: distributed observer 1})-(\ref{6eq: distributed observer 3}) and the MPC problem (\ref{6eq: MPC Problem}). 
		If $s(0)\in\Omega_{\rho_s^0}\triangleq\left\{s|V\leqslant\rho_s^0\right\}$ and the following stability condition is satisfied by choosing appropriate control gains $c_i$ for $i=1,2,\cdots, M$, 
		\begin{align}\label{6eq: stability condition2}
			&-2\rho_s\lambda_{\min}(C)+\left(\kappa_1+\lambda_{\max}(C)\kappa_2\right)\delta\leqslant0
		\end{align}
		where $\rho_s\leqslant\rho_s^0$, $C\!=\!\text{diag}(c_1,c_2,\cdots, c_M)$, $\kappa_{1}\!=\!\max(\kappa_{1_1}, \kappa_{2_1},\cdots, \kappa_{M_1})$ and $\kappa_{2}\!=\!\max(\kappa_{1_2}, \kappa_{2_2},\cdots,\kappa_{M_2})$, then, the sliding mode error $s$ of the closed-loop system is always bounded and ultimately converges to  $\Omega_{\rho_s}\triangleq\left\{s|V\leqslant\rho_s\right\}$. 
	\end{thmt}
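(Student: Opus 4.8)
The plan is to build a collective Lyapunov function $V=\sum_{i=1}^M V_i=\tfrac12\sum_{i=1}^M\|s_i\|^2$ and track its evolution over the sampling intervals, using the per-agent bound (\ref{6eq: dotV3}) that has already been derived from Lipschitz continuity and the MPC stability constraint. Summing (\ref{6eq: dotV3}) over $i$ and bounding each $c_i\|s_i\|^2$ from below by $\lambda_{\min}(C)\|s_i\|^2$ (and each $c_i$ appearing in the disturbance term from above by $\lambda_{\max}(C)$), one obtains an estimate of the form
\begin{align}
\dot{V}(t)\leqslant-2\lambda_{\min}(C)\,V(t)+\bigl(\kappa_1+\lambda_{\max}(C)\kappa_2\bigr)\delta,
\end{align}
valid on each interval $t\in[t_k,t_{k+1})$; since the right-hand side depends on the state only through $V$ and the bound is uniform across intervals, this differential inequality in fact holds for all $t\geqslant0$.

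Next I would invoke the invariance argument. Suppose $V(t_k)\leqslant\rho_s^0$ for some $k$ (true at $k=0$ by hypothesis $s(0)\in\Omega_{\rho_s^0}$). On $\partial\Omega_{\rho_s^0}$, i.e.\ where $V=\rho_s^0\geqslant\rho_s$, the stability condition (\ref{6eq: stability condition2}) with $\rho_s$ replaced by $\rho_s^0$ (which holds a fortiori, since the left-hand side of (\ref{6eq: stability condition2}) is increasing in $\rho_s$ and the condition is assumed at $\rho_s$) gives $-2\rho_s^0\lambda_{\min}(C)+(\kappa_1+\lambda_{\max}(C)\kappa_2)\delta\leqslant0$, hence $\dot V\leqslant0$ there. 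A standard comparison/continuity argument then shows $V(t)$ cannot exceed $\rho_s^0$ for $t\in[t_k,t_{k+1})$, so $V(t_{k+1})\leqslant\rho_s^0$; by induction $V(t)\leqslant\rho_s^0$ for all $t\geqslant0$, establishing boundedness of $s$.

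For ultimate convergence to $\Omega_{\rho_s}$, I would argue that whenever $V(t)>\rho_s$ we have, by the same monotonicity of (\ref{6eq: stability condition2}) in its first argument, the strict-type estimate $\dot V(t)\leqslant-2\lambda_{\min}(C)\bigl(V(t)-\rho_s\bigr)$ after absorbing the $\delta$-term; this gives exponential decay of $V(t)-\rho_s$ toward zero and hence $\limsup_{t\to\infty}V(t)\leqslant\rho_s$, i.e.\ $s$ enters and remains in $\Omega_{\rho_s}$. Throughout, I would use Theorem \ref{6thmt: 1} only insofar as it guarantees $\hat\xi_i$, $\hat S_i$, $\hat\varDelta_i$ and their relevant derivatives are bounded, which is what makes the constants $M_{x_i}$, $M_{\xi_i}$, $L_{x_i}$, $L_{\xi_i}$, $L_{s_i^2}$ in (\ref{6eq: LM}) genuinely finite and independent of $t_k$.

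The main obstacle I anticipate is justifying that the sampled-data inter-sample growth bound (\ref{6eq: dotV3}) is valid \emph{uniformly} over all intervals and for all initial conditions in $\Omega_{\rho_s^0}$: the constants $M_{x_i}$, $L_{x_i}$, etc., depend on the size of the state and of the estimator signals over the interval, so one needs a prior boundedness bootstrap — the invariance of $\Omega_{\rho_s^0}$ for $s$ together with the boundedness of the observer signals from Theorem \ref{6thmt: 1} — to close the loop without circularity. I would handle this by first fixing a compact set containing all trajectories that start in $\Omega_{\rho_s^0}$ (using that $\hat\xi_i$ is bounded and $s_i$ bounded implies $x_i$ bounded), deriving the Lipschitz/growth constants on that compact set, and only then running the induction above.
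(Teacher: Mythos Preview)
Your proposal is correct and follows essentially the same route as the paper: form the collective Lyapunov function $V=\tfrac12\|s\|^2=\sum_i V_i$, sum the per-agent bound (\ref{6eq: dotV3}) to obtain $\dot V\leqslant-2\lambda_{\min}(C)V+(\kappa_1+\lambda_{\max}(C)\kappa_2)\delta$, and then use condition (\ref{6eq: stability condition2}) to conclude invariance of $\Omega_{\rho_s^0}$ and convergence into $\Omega_{\rho_s}$. Your treatment is in fact more careful than the paper's very brief argument---in particular your explicit monotonicity step passing the condition from $\rho_s$ to $\rho_s^0$ and your discussion of the bootstrap needed to make the Lipschitz/growth constants uniform are points the paper leaves implicit.
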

	\begin{proof}
		A global Lyapunov function for the entire MAS can be selected as
		\begin{align}
			V=\frac{1}{2}\|s\|^2=\sum_{i=1}^M\frac{1}{2}\|s_i\|^2
		\end{align}
		Recalling (\ref{6eq: dotV3}), we can obtain that
		\begin{align}\label{6eq: dotV4}
			\dot{V}\leqslant&-\lambda_{\min}(C)\|s(t)\|^2+(\kappa_{1}+\lambda_{\max}(C)\kappa_{2})\delta
		\end{align}
		where $C\!=\!\text{diag}(c_1,c_2,\cdots, c_M)$, $\kappa_{1}\!=\!\max(\kappa_{1_1}, \kappa_{2_1},\cdots, \kappa_{M_1})$ and $\kappa_{2}\!=\!\max(\kappa_{1_2}, \kappa_{2_2},\cdots,$ $ \kappa_{M_2})$. 
		
		From the definition of $V$, we further have
		\begin{align}
			\dot{V}\leqslant-\lambda_{\min}(C)V+(\kappa_{1}+\lambda_{\max}(C)\kappa_{2})\delta
		\end{align}
		If the condition (\ref{6eq: stability condition2}) is satisfied, then it can be derived that $	\dot{V}<0$ for all $s\in\left\{s|\rho_s^0<V\leqslant\rho_{s}\right\}$ and $\dot{V}=0$ for $V=\rho_s$. Therefore, it implies that $s$ converges to $\Omega_{\rho_s}$ without leaving the stability region $\Omega_{\rho^0_s}$ as $t$ approaches $\infty$ 
	\end{proof}

	Based on Theorem \ref{6thmt: 2}, we can conclude that the state variable ${x}$ converges towards $\hat{\xi}+\hat{\Delta}$. Given the previously validated asymptotic convergence of $\hat{\xi}$ to $\xi$ and $\hat{\Delta}$ to $\Delta$ in Theorem \ref{6thmt: 1}, we can conclusively demonstrate the ultimate boundedness and convergence of the global formation tracking error $\tilde{x}$, as defined in (\ref{6eq: formation error}). This conclusion is drawn by combining the results from both Theorem \ref{6thmt: 1} and Theorem \ref{6thmt: 2}.
	\begin{rmk}
		It is crucial to recognize that the analytical convergence error arises from the sampled-and-hold implementation of the employed MPC. The asymptotically stable nature of the distributed observer network does not compromise the ultimate accuracy of formation tracking by providing sufficiently accurate estimations to the controller. 
	\end{rmk}
	\section{Simulation Study}\label{6s: Simulation Study}
	Simulation studies on two different examples are carried out to evaluate the performance of the proposed method in achieving formation tracking control under input constraints and communication link faults. This section provides the simulation results and 
	
	\subsection{Example 1: A  Numerical Multi-agent System}
	We first consider a numerical example---a nonlinear MAS with 3 followers and a leader node $0$. The $3$ followers can be described by third-order nonlinear systems as follows
	\begin{subequations}
		\begin{align}
			&	\left\{
			\begin{array}{rl}
				\dot{x}_{1,1}&\!\!=x_{1,2}\\
				\dot{x}_{1,2}&\!\!=x_{1,3}\\
				\dot{x}_{1,3}&\!\!=x_{1,1}x_{1,2}+x_{1,3}-x_{1,1}^3+u_1\\
				y_1&\!\!=x_{1,1}
			\end{array}
			\right.\\
			&	\left\{
			\begin{array}{rl}
				\dot{x}_{2,1}&\!\!=x_{2,2}\\
				\dot{x}_{2,2}&\!\!=x_{2,3}\\
				\dot{x}_{2,3}&\!\!= x_{2,1}\sin(x_{2,2})+\cos^2(x_{2,3}) + u_2\\
				y_2&\!\!=x_{2,1}
			\end{array}
			\right.	\\
			&	\left\{
			\begin{array}{rl}
				\dot{x}_{3,1}&\!\!=x_{2,2}\\
				\dot{x}_{3,2}&\!\!=x_{2,3}\\
				\dot{x}_{3,3}&\!\!= -\frac{1}{2}(x_{3,1}+x_{3,2}-1)^2(x_{3,3}-1) + u_3\\
				y_3&\!\!=x_{3,1}
			\end{array}
			\right.
		\end{align}
	\end{subequations}
	with the input constraint defined as
	\begin{align}\label{6eq: EX1 input constraint}
		\Omega_{u_{i}}=\left\{u_{i}|-3\leqslant u_{i} \leqslant 3 \right\}
	\end{align}  
	for $i=1,2,3$.
	The initial conditions of the 3 followers are $x_1(0)=[1.3\ 0\ 0]^\top$, $x_2(0)=[0.5\ 0\ 0]^\top$, $x_3(0)=[0\ 0\ 0]^\top$.  
	
	\begin{figure}[t]
		\centering
		\includegraphics[width=\columnwidth]{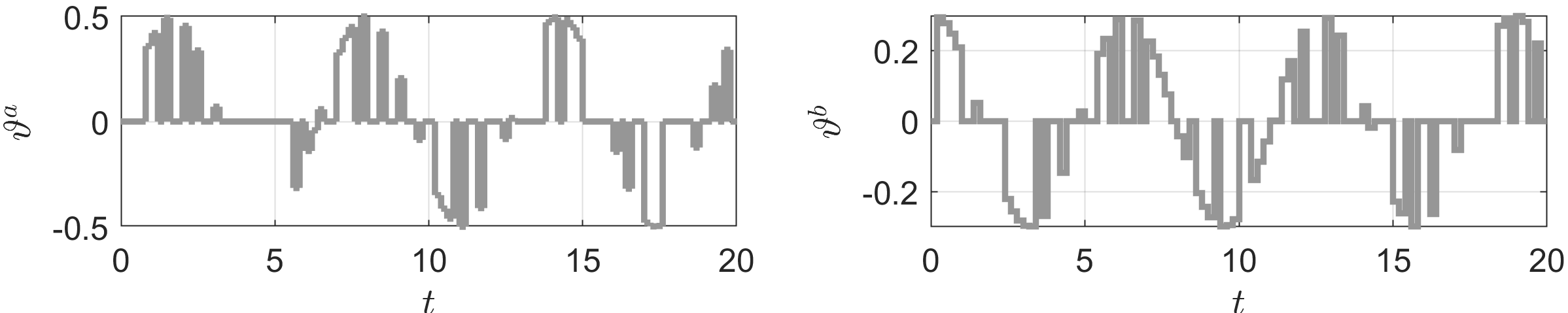}
		\caption{Communication link faults}
		\label{6fig: EX1_th}
	\end{figure}
	\begin{figure}[t]
		\centering
		\includegraphics[width=\columnwidth]{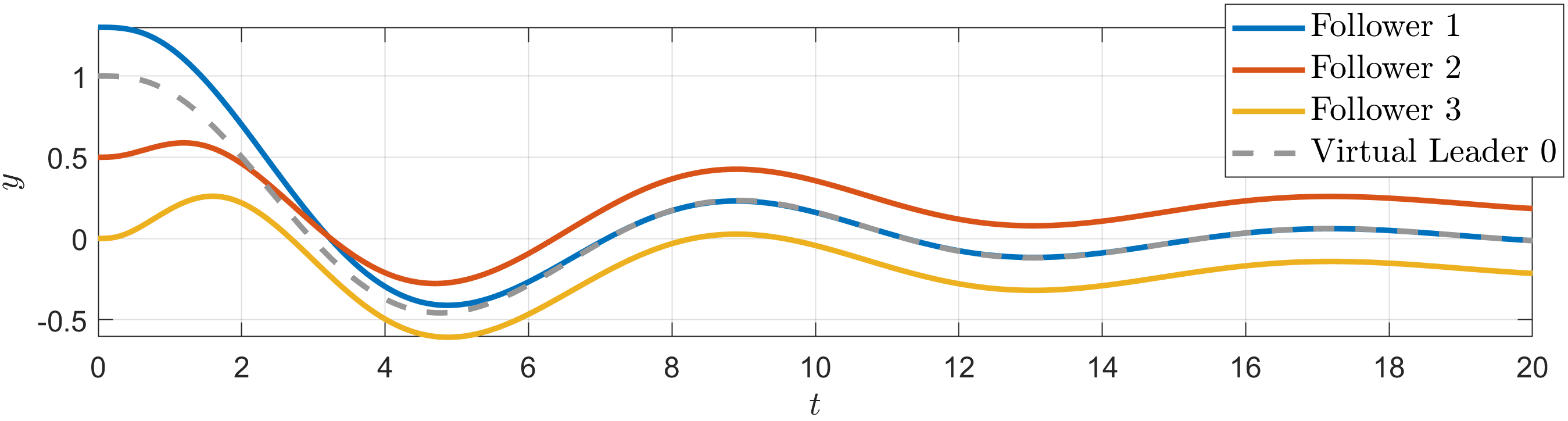}
		\caption{Formation tracking performance}
		\label{6fig: EX1_y}
	\end{figure}
	\begin{figure}[t]
		\centering
		\includegraphics[width=\columnwidth]{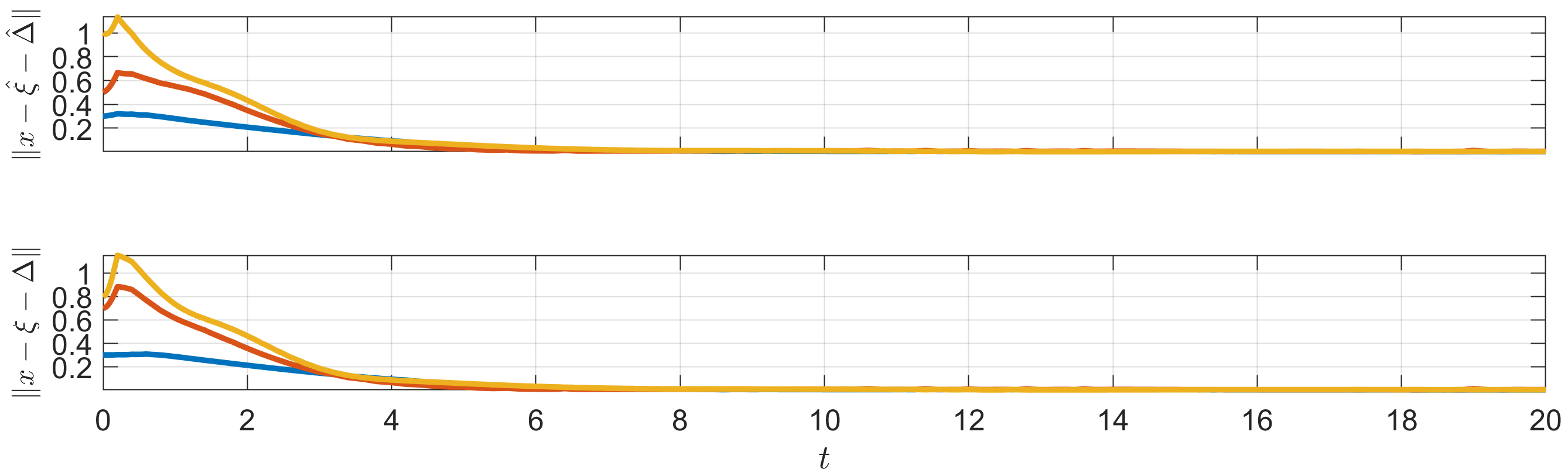}
		\caption{Norms of formation tracking errors}
		\label{6fig: EX1_e}
	\end{figure}
	\begin{figure}[t]
		\centering
		\includegraphics[width=\columnwidth]{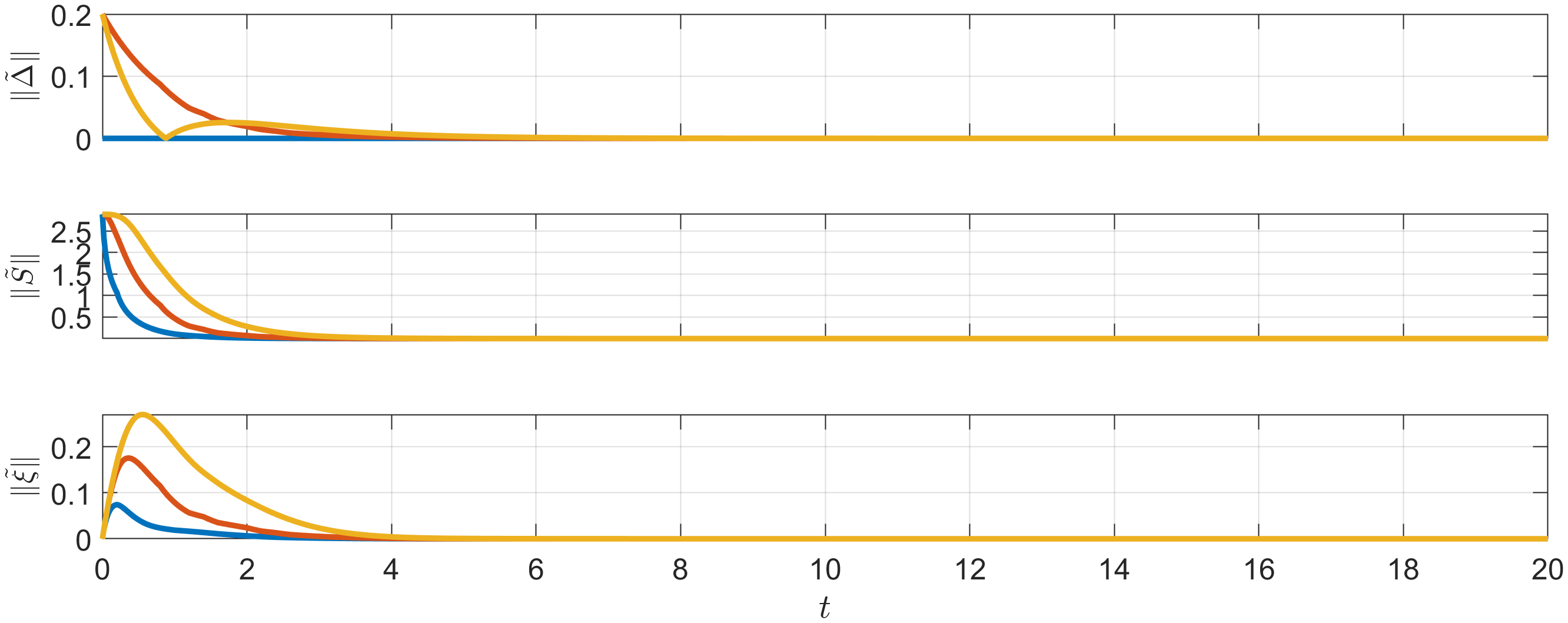}
		\caption{Norms of estimation errors}
		\label{6fig: EX1_o}
	\end{figure}
	\begin{figure}[t]
		\centering
		\includegraphics[width=\columnwidth]{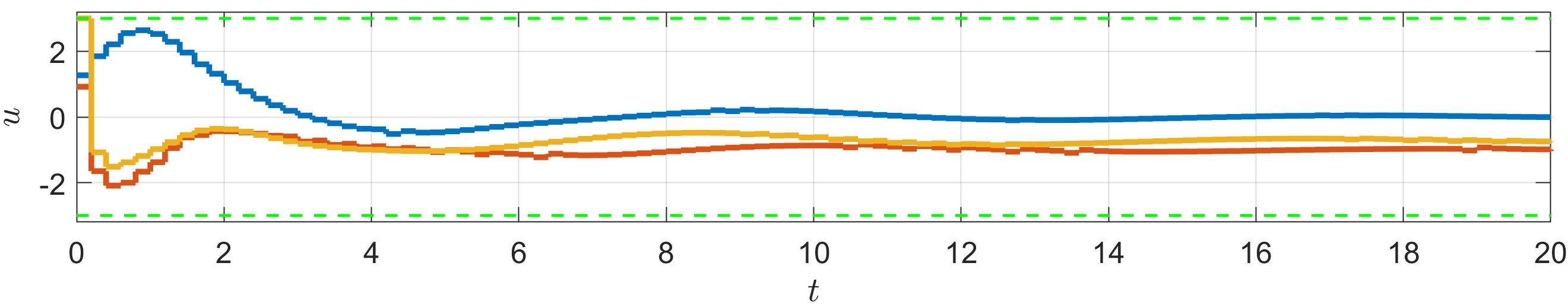}
		\caption{Control commands}
		\label{6fig: EX1_u}
	\end{figure}
	
	Let the dynamics of the leader node be
	\begin{align}
		\left\{
		\begin{array}{rl}
			\dot{\xi}_{0,1}&\!\!=\xi_{0,2}\\
			\dot{\xi}_{0,2}&\!\!=\xi_{0,3}\\
			\dot{\xi}_{0,3}&\!\!= -{\xi}_{0,1} -1.16\xi_{0,2} -2\xi_{0,3}
		\end{array}
		\right.
	\end{align}
	with $\xi_0(0)=\left[1\ 0\ 0\right]^\top$. 
	
	In simulations, the desired formation displacement vectors of the 3 followers with respect to the leader $0$ are set as ${\varDelta}_{10}=[0\ 0\ 0]^\top$, ${\varDelta}_{20}=[0.2\ 0\ 0]^\top$, ${\varDelta}_{30}=[-0.2\ 0\ 0]^\top$, ${\varDelta}_{40}=[2\ 0\ 0]^\top$. Time-varying edge weights, including the adjacency matrix and pinning gains, are designed to mimic faults in the communication network. In particular, the adjacency matrix and the pinning matrix are
	\begin{align}
		\mathcal{A}&=\left[\begin{array}{lll}
			0&0&\hspace{1cm}0\\
			1+0.5\sin(t)*{\rm rand([0,1])}&0&\hspace{1cm}0\\
			1&0&\hspace{1cm}0
		\end{array}\right]\\
		\mathcal{B}&=\left[\begin{array}{lll}
			1+0.3\sin(t)*{\rm rand([0,1])}&0&\hspace{1cm}0\\
			0&0&\hspace{1cm}0\\
			0&0&\hspace{1cm}0
		\end{array}\right]
	\end{align}
	where ${\rm rand}([0,1])$ is a random signal chosen from the interval $[0,1]$.

	The control parameters are selected following the obtained stability conditions. The sampling period for updating the control actions is set as $0.2$s. The leader state observation gains are chosen as $c_{\xi_1}=c_{\xi_2}=c_{\xi_3}=2$. In the definition of the sliding mode tracking error, $\lambda_{1,0}=\lambda_{2,0}=\lambda_{3,0}=1$ and $\lambda_{1,1}=\lambda_{2,1}=\lambda_{3,1}=2$. The control gains are $c_1=c_2=c_3=2$. In the MPC problems, the prediction horizon is 0.8s, $Q_1=Q_2=Q_3=10$ and $R_1=R_2=R_3=0.1$.
	
	The time-varying communication fault parameters introduced to the network are depicted in Figure \ref{6fig: EX1_th}. The simulation results, as shown in Figures \ref{6fig: EX1_y}-\ref{6fig: EX1_u}, illustrate the responses of the three followers with solid lines in blue, orange, and yellow, while the virtual leader's responses are represented with gray dashed lines.  Specifically, Figure \ref{6fig: EX1_y} illustrates the output trajectories of the followers, demonstrating that the formation tracking objective has been successfully achieved. The norms of the tracking errors, relative to both the estimated and actual leader states, are displayed in Figure \ref{6fig: EX1_e}. Additionally, Figure \ref{6fig: EX1_o} presents the estimation errors of the relative position displacement, the leader's dynamics, and the leader state. Figure \ref{6fig: EX1_u} illustrates the control commands applied to the followers, showing that the input constraint (\ref{6eq: EX1 input constraint}) is satisfied. 
	
	\subsection{Example 2: A Multi-UAV System}

	Next, we consider applying the proposed adaptive distributed control strategy to the outer-loop translation control of a group of UAVs. The networked UAV system comprises 5 UAVs, with their translational motions described by
	\begin{subequations}
		\begin{align}
			&	\left\{
			\begin{array}{rl}
				\dot{{\zeta}}_{i}&={v}_{i}\\
				\dot{{v}}_{i}&=-{g}+\frac{1}{m_{i}}{r}(u_{\eta_{i}})u_{F_{i}}
			\end{array}
			\right.
		\end{align}
	\end{subequations}
	where $i=1,2,\cdots,5$; $g=[0\ 0\ 9.81]^\top$ and $m_{i}=2.618
	$ kg; ${\zeta}_{i}\in\mathbb{R}^3$ and $v_{i}\in\mathbb{R}^3$ are position and velocity vectors of the 5 UAVs;  $u_{\eta_{i}}=\left[u_{\phi_{i}}\ u_{\theta_{i}}\ u_{\psi_{i}}\right]^\top$ and $u_{F_{i}}\in\mathbb{R}$ are control inputs of the translational subsystem, representing the desired rotation angles and the total thrust force, respectively. The control inputs suffer from the following input constraints: 
	\begin{align}
		\Omega_{u_{\eta_{i}}}&=\left\{u_{\eta_{i}}|[-0.5\ -0.5\ -0.1]^\top\leqslant u_{\eta_{i}} \leqslant [0.5\ 0.5\ 0.1]^\top \right\}\\
		\Omega_{u_{F_{i}}}&=\left\{u_{F_{i}}|-4\leqslant u_{F_{i}} \leqslant 4 \right\}
	\end{align}  
	The initial positions of the 5 UAVs are $\zeta_1(0)=[10\ 0\ 0]^\top$, $x_2(0)=[7\ 0\ 0]^\top$, $x_3(0)=[13\ 0\ 0]^\top$, $x_4(0)=[8.5\ 0\ 0]^\top$, $x_5(0)=[11.5\ 0\ 0]^\top$. Their initial linear velocities are all zero.

	Let the dynamics of the leader node be
	\begin{align}
		\dot{\xi}_{0}&\!=\!\left[\begin{smallmatrix}
			0     & 0    &  0    &  1 &    0    &  0\\
			0   &   0   &   0  &    0  &    1   &   0\\
			0  &    0    & 0 &     0   &   0&      1\\
			-0.0676 &0    &  0   &  -0.1040 &0    &  0\\
			0     &-0.0676& 0     & 0  &   -0.1040 &0\\
			0   &   0   &  -0.0025& 0  &    0   &  -0.02
		\end{smallmatrix}\right]\xi_{0}\!\!
	\end{align}
	with $\xi_0(0)=\left[10\ 0\ 0\ 0\ 3\ 1.2\right]^\top$.

	\begin{figure}[t]
		\centering
		\includegraphics[width=0.3\columnwidth]{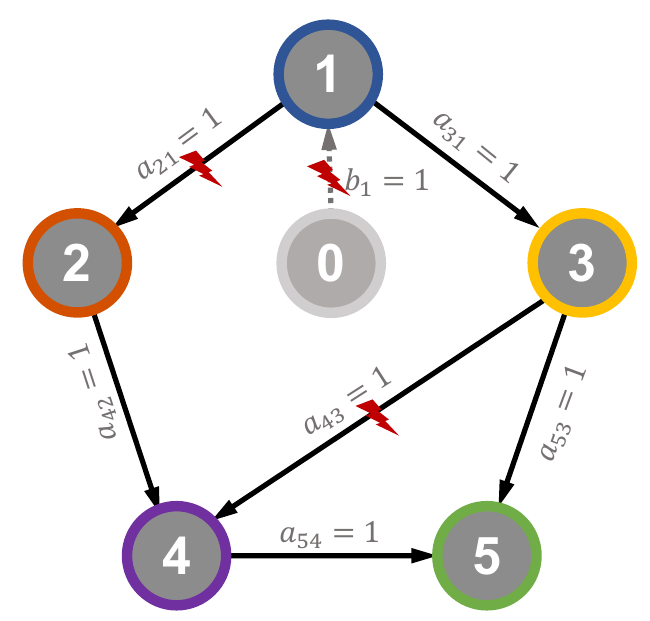}
		\caption{Formation shape and communication graph of the 5-UAV system}
		\label{6fig: EX2_g}
	\end{figure}
	\begin{figure}[t]
		\centering
		\includegraphics[width=\columnwidth]{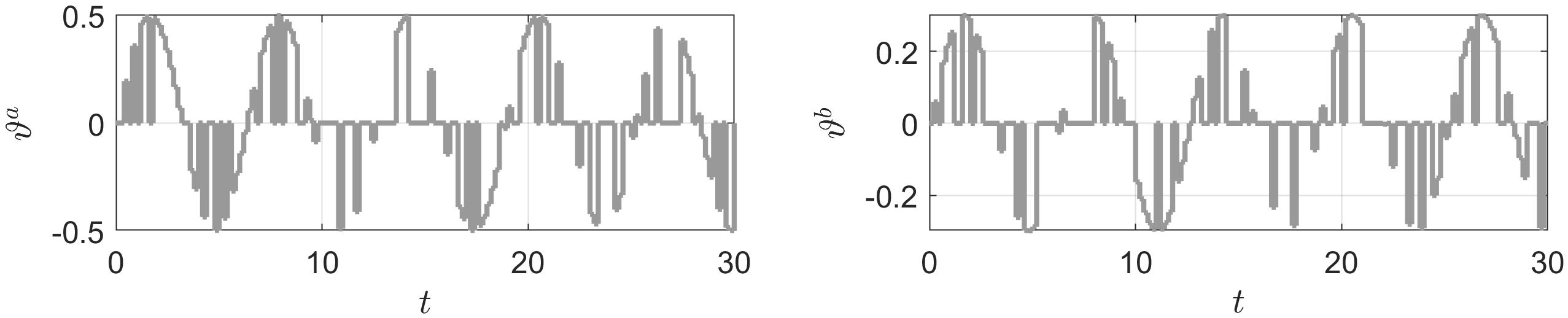}
		\caption{Communication link faults in the 5-UAV system}
		\label{6fig: EX2_th}
	\end{figure}
	\begin{figure}[t]
		\centering
		\includegraphics[width=\columnwidth]{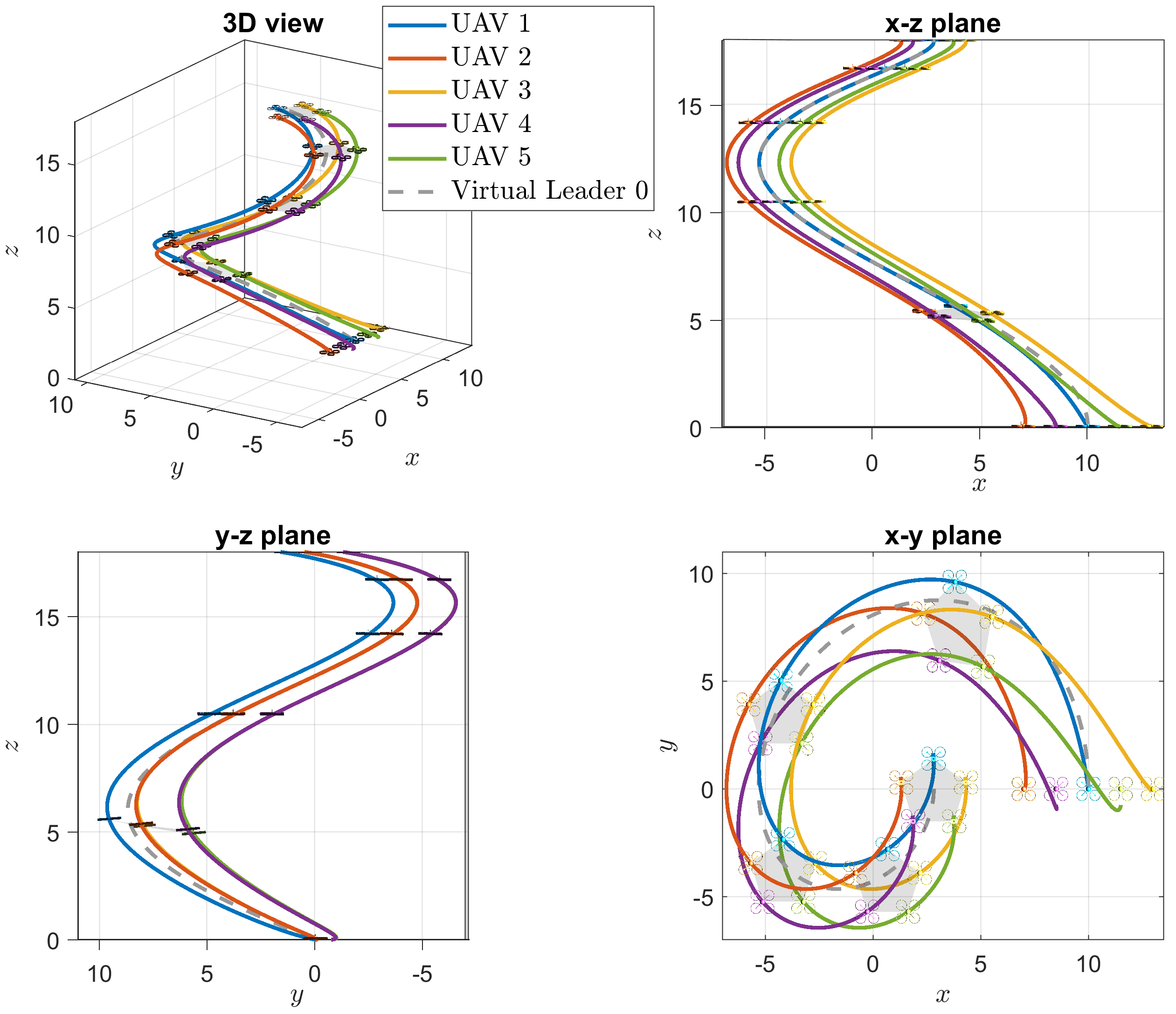}
		\caption{Formation tracking performance of the 5 UAVs}
		\label{6fig: EX2_y}
	\end{figure}
	\begin{figure}[t]
		\centering
		\includegraphics[width=\columnwidth]{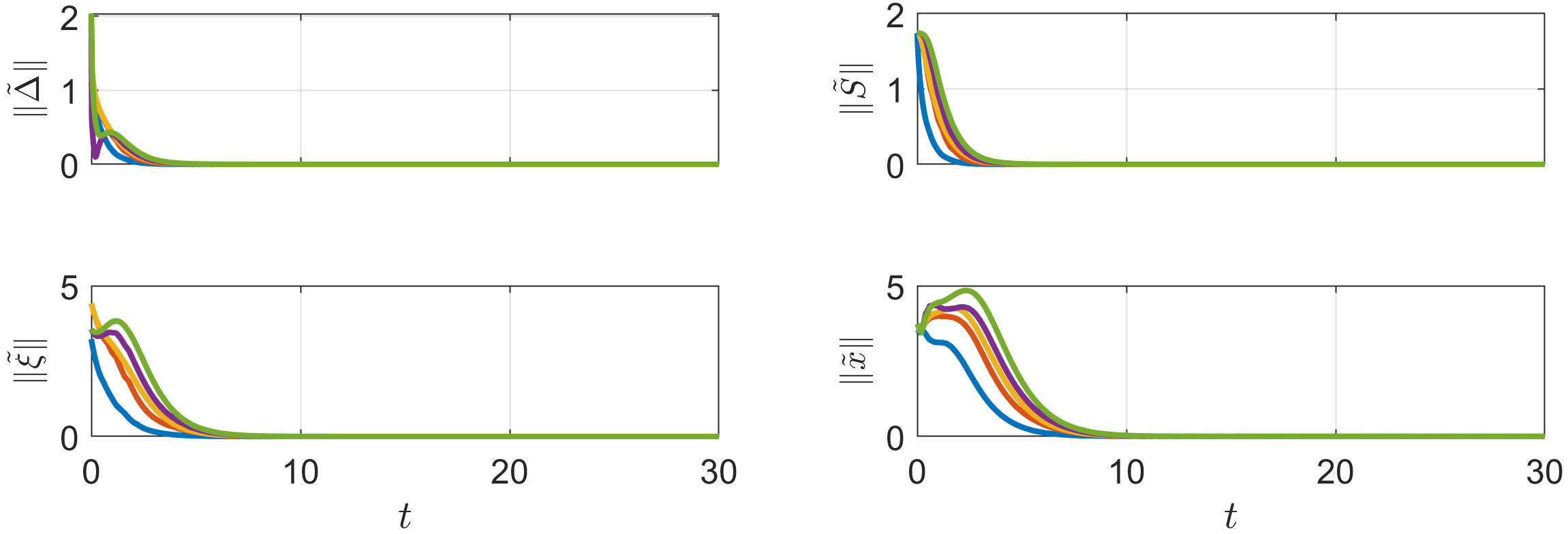}
		\caption{Norms of estimation and tracking errors of the 5 UAVs}
		\label{6fig: EX2_e}
	\end{figure}
	\begin{figure}[t]
		\centering
		\includegraphics[width=\columnwidth]{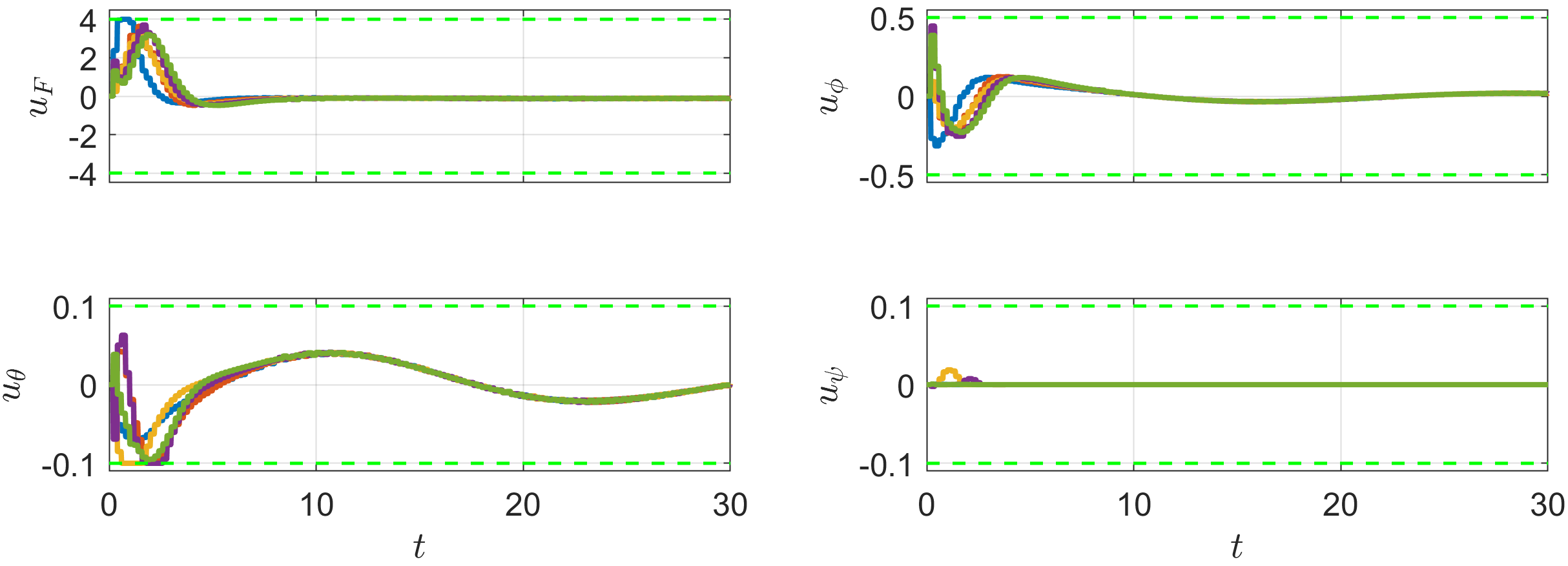}
		\caption{Control commands of the 5 UAVs}
		\label{6fig: EX2_u}
	\end{figure}
	
	The prescribed formation geometric shape and directed communication graph of the 5-UAV systems are illustrated in Figure \ref{6fig: EX2_g}. The desired displacement vectors of the followers with respect to the leader $0$ are set as ${\varDelta}_{10}=[0\ 1.1\ 0\ 0\ 0\ 0]^\top$, ${\varDelta}_{20}=[-1.5\ 0\ 0\ 0\ 0\ 0]^\top$, ${\varDelta}_{30}=[1.5\ 0\ 0\ 0\ 0\ 0]^\top$, ${\varDelta}_{40}=[-0.95\ -1.8\ 0\ 0\ 0\ 0]^\top$, ${\varDelta}_{50}=[0.95\ -1.8\ 0\ 0\ 0\ 0]^\top$. Time-varying edge weights, including the adjacency matrix and pinning gains, are designed to mimic faults in the communication network. In particular, the adjacency matrix and the pinning matrix are
	\begin{small}
		\begin{align}
			\mathcal{A}&=\left[\begin{smallmatrix}
				0&0&0&0&0\\
				1+0.5\sin(t) * {\rm rand([0,1])}&0&0&0&0\\
				1&0&0&0&0\\
				0&1&1+0.5\sin(t)*{\rm rand([0,1])}&0&0\\
				0&0&1&1&0
			\end{smallmatrix}\right]\\
			\mathcal{B}&={\scriptstyle\left[\begin{smallmatrix}
					1+0.3\sin(t)*{\rm rand([0,1])}&0&0&0&0\\
					0&0&0&0&0\\
					0&0&0&0&0
				\end{smallmatrix}\right]}
		\end{align}
	\end{small}
	where ${\rm rand}([0,1])$ is a random signal chosen from the interval $[0,1]$. The time-varying fault communication parameters added to the network are illustrated in Figure \ref{6fig: EX1_th}. 
	
	The control parameters are selected following the previously established stability conditions. The sampling period for updating the control actions is set as $0.2$s. The leader state observation gains are chosen as $c_{\xi_1}=c_{\xi_2}=c_{\xi_3}=c_{\xi_4}=c_{\xi_5}=1.2$. In the definition of the sliding mode tracking error, $\lambda_{1,0}=\lambda_{2,0}=\lambda_{3,0}=\lambda_{4,0}=\lambda_{5,0}=1$. The control gains are $c_1=c_2=c_3=c_4=c_5=2$. In the MPC problems, the prediction horizon is 0.8s, $Q_1=Q_2=Q_3=Q_4=Q_5=\text{diag}(2,2,5)$ and $R_1=R_2=R_3=R_4=R_5=\text{diag}(0.1,10,10,10)$. 
	
	The simulation results are illustrated in Figures \ref{6fig: EX2_y}-\ref{6fig: EX2_u}, where the responses of the 5 UAVs are depicted with solid lines in blue, orange, yellow, purple, and green, and the virtual leader's responses are depicted with gray dashed lines. Figure \ref{6fig: EX2_y} illustrates the formation tracking performance of the 5-UAV system in 3D and three-plane views. The norms of the estimation and tracking errors of the 5 UAVs are displayed in Figure \ref{6fig: EX2_e}, respectively. Figure \ref{6fig: EX2_u} shows the control commands of the 5 UAVs. It can be seen that their input constraints are all satisfied. 
	
	\section{Conclusions}\label{6s: Conclusions}
	A novel adaptive distributed observer-based DMPC method has been introduced in this paper, which is developed for nonlinear multi-agent formation tracking with input constraints and unknown communication faults. The method utilizes adaptive distributed observers in each local control system to estimate the state, dynamics, and relative position of the leader, enabling each agent to independently achieve formation tracking without direct access to the leader's information. The designed distributed MPC controllers use the estimated information to manipulate agents into a predefined formation while respecting input constraints. This research employs adaptive observers to reduce the complexity in the DMPC design, allowing for effective local controller formulation and resilient distributed formation tracking. 
	
	\bibliographystyle{IEEEtran}
	\bibliography{Ref}
	
\end{document}